\newcommand{\sdparagraph}[1]{{\vspace{.3cm} \noindent \textit{#1}\ }}
\newcommand{\prg}[1]{{\mbox{\tt{#1}}}}
\newcommand{\m}{\prg{m}}
 \newcommand{\f}{\prg{f}}
  \newcommand{\e}{\prg{e}}
 \renewcommand{\c}{\prg{C}}
 \renewcommand{\v}{\prg{v}}
  \newcommand{\x}{\prg{x}}
  \newcommand{\p}{\prg{p}}
   \newcommand{\y}{\prg{y}}
      \newcommand{\uu}{\prg{u}}
  \newcommand{\this}{\prg{this}}
\newcommand{\fldMap}{\textit{fldMap}}
\newcommand{\forget}[1]{}
\newcommand{\etc}{{\it etc.}}
\newcommand{\eg}{{\it e.g.\,}}
\newcommand{\ie}{{\it i.e.\,}}
\newcommand{\Future}[1] {\ensuremath{{\mathsf{will}}\langle \,#1\,\rangle}}
\newcommand{\Using}[2]{\ensuremath{\langle\, #1\, \mathsf{in}\, #2\, \rangle}}
\newcommand{\External}[1] {\ensuremath{{\mathsf{external}}\langle\,  #1\, \rangle}}
\newcommand{\Internal}[1] {\ensuremath{{\mathsf{internal}}\langle\,  #1\, \rangle}}
\newcommand{\Changes}[1]{\ensuremath{{\mathsf{changes}}\langle\,#1\,\rangle}}
\newcommand{\CanAccessTr}[2]{\ensuremath{\langle\, #1\, \mathsf{access}^*\, #2\, \rangle}}
\newcommand{\CanAccess}[2]{\ensuremath{\langle\, #1\, \mathsf{access}\, #2\, \rangle}}
\newcommand{\Calls}[4]{\ensuremath{\langle\, #1\, \mathsf{calls}\, #3.#2\lp#4\rp\, \rangle}}
\newcommand{\Next}[1] {\ensuremath{{\mathsf{next}}\langle \,#1\,\rangle}}
\newcommand{\PrevId}{\ensuremath{{\mathcal{P}}\textrm{\textit{rev}}}}
\newcommand{\Prev}[1] {\ensuremath{{\mathsf{prev}}\langle \,#1\,\rangle}}
\newcommand{\Past}[1]  {\ensuremath{{\mathsf{was}}\langle \,#1\,\rangle}}
\newcommand{\Initial}[1] {{{\mathcal I}\!nitial}\langle#1\rangle}
\newcommand{\appref}[1]{see App.~\ref{#1}}
\newcommand{\varMap}{{\ensuremath{\beta}}}
\newcommand{\LangOO} {\ensuremath{{\mathcal L}{_{\tt {oo}}}}\xspace}
\newcommand{\Chainmail} {\ensuremath{{\mathcal C}hainmail}\xspace}
\newcommand{\cf}{{\it c.f.~}}
\newcommand{\SP}{{\hspace{.1in}}}
\newcommand{\A}{\ensuremath{A}}
\newcommand{\Arising}[1]{{\mathcal{A}}\textrm{\textit{rising}}(#1)}
\newcommand{\syntax}[1]{\prg{{\it #1}}}
\newcommand{\BBC}{$::=$} 
\newcommand{\SOR}{\ensuremath{\ \mid\ }} 
 \newcommand{\interp}[2]{{\ensuremath{\lfloor{ {#1}}\rfloor_{#2}}}}
\newcommand{\kw}[1]{\prg{#1}} 
\newcommand{\kwN}[1]{{\bf{\sf {#1}}}}
\newcommand{\returnKW}{{\bf{\sf {return}}}}
\newcommand{\newKW}{\mbox{\bf{\sf{new}}}}
\newcommand{\semi}{\mbox{{\kw {;}}\ }}
\newcommand{\lb}{\prg{\mbox{\tt{\bf{\{ }}}}}
\newcommand{\rb}{\prg{\mbox{\tt{\bf{\} }}}}}
\newcommand{\lp}{\prg{\mbox{\tt{\bf{( }}}}}
\newcommand{\rp}{\prg{\mbox{\tt{\bf{) }}}}}
 \newcommand{\M}{\prg{\ensuremath{\prg{M}}}}
  \newcommand{\Prog}[1]{\M{#1}}
\newcommand{\mkpair}{\fatsemi}
\newcommand{\restrct}[2]{\ensuremath{#1\!\!\downarrow\!_{#2}}}
\newcommand{\adapt}{\ensuremath{\!\triangleleft\!}}
\newcommand{\link}{\!\circ\!}
\newcommand{\ClassOf}[2] {\ensuremath{{\mathcal C}{\mathit{lass}}(#1)_{#2}}}
\newcommand{\SE}{\ensuremath{\prg{e}}}
\newcommand{\SEPrime}{\ensuremath{\prg{e}'}}
\newcommand{\expandexp}[1]{}
 \newcommand{\IFF}{{\SP {\mbox{ if }} \SP}}
\newcommand{\SAF}{\ensuremath{W}} 
\newcommand{\inferenceruleN}[3]
{
\begin{array}{l}
\SP\SP\SP\SP\SP  {\sf #1}
\\ #2  \\ \hline   #3
  \end{array}
}
\newcommand{\inferenceruleNM}[3]
{
\begin{array}{l}
\SP\SP\SP\SP\SP \SP\SP\SP\SP\SP\SP\SP\SP  {\sf #1}
\\ #2  \\ \hline   #3
  \end{array}
}
\newcommand{\inferenceruleNN}[3]
{
\begin{array}{l}
\SP\SP\SP\SP\SP\SP\SP\SP
\SP\SP\SP\SP\SP\SP\SP\SP
\SP\SP\SP\SP\SP\SP\SP\SP
\SP\SP\SP\SP\SP\SP\SP\SP
\SP\SP\SP\SP\SP\SP\SP\SP
\SP\SP
   {\sf #1}
\\ #2  \\ \hline   #3
  \end{array}
}
\newif\ifNumberResults\NumberResultstrue
\def\@@opargbegintheorem#1#2#3{\@@@@begintheorem{\bf\@@thmname{#1}{#2}(#3)}}
\def\@@begintheorem#1#2{\@@@@begintheorem{\bf\@@thmname{#1}{#2}}}
\def\@@@@begintheorem#1{\par\removelastskip\smallskip\noindent{#1}}
\def\@@thmname#1#2{#1\ \ifNumberResults#2\ \fi}
\newcommand{\SF}{{\prg S}}
\newcommand{\z}{{\prg z}}
\newcommand{\pu}{{\prg u}}
\newcommand{\pb}{{\prg b}}
\newcommand{\acc}{{\prg a}}
\newcommand{\bal}{{\prg{balance}}}
\newcommand{\zs}{{\prg{zs}}}
\newcommand{\Meths}[3]{\ensuremath{{\mathcal M}(}\Prog{#1},\prg{#2},
\prg{#3}\ensuremath{)} }
\newcommand{\WideFigWhere}[4] 
{
\begin{figure*}[{#4}]
\begin{center}
\noindent
\fbox{
\begin{minipage}{5. in}
{#1} 
\end{minipage}
}
\caption{#2}
\label{#3}
\end{center}
\end{figure*}
}
\newcommand{\BigWideFigWhere}[4] 
{
\begin{figure*}[{#4}]
\begin{center}
\noindent
{\normalsize
\hrule
\begin{minipage}{5. in}
{#1} 
\end{minipage}
\hrule
}
\caption{#2}
\label{#3}
\end{center}
\end{figure*}
}
\newcommand{\NotTooWideFigWhere}[4] 
{
\begin{figure*}[{#4}]
\begin{center}
\noindent
\fbox{
\begin{minipage}{4.3 in}
{#1} 
\end{minipage}
}
\caption{#2}
\label{#3}
\end{center}
\end{figure*}
}
\newcommand{\BigNotTooWideFigWhere}[4] 
{
\begin{figure*}[{#4}]
\begin{center}
\noindent
{\normalsize
\hrule
\begin{minipage}{4.3 in}
{#1} 
\end{minipage}
\hrule
}
\caption{#2}
\label{#3}
\end{center}
\end{figure*}
}
\definecolor{dkgreen}{rgb}{0,0.6,0}
\definecolor{gray}{rgb}{0.5,0.5,0.5}
\definecolor{mauve}{rgb}{0.58,0,0.82}
\newcommand{\secomment}[1]{{\ensuremath{\blacksquare}}\footnote{\se{#1}}}
 \newcommand{\sd}[1]{#1} 
\newcommand{\se}[1]{#1} 
\newcommand{\sophia}[1]{#1} 
\newcommand{\susan}[1]{#1} 
\newcommand{\toby}[1]{} 
\newcommand{\james}[1]{} 
\begin{document}

\title{Holistic Specifications for Robust Programs}


\author{Sophia Drossopoulou\inst{1}\orcidID{0000-0002-1993-1142} \and
  James Noble\inst{2}\orcidID{0000-0001-9036-5692} \and
 Julian Mackay\inst{2}\orcidID{0000-0003-3098-3901} \and
Susan Eisenbach\inst{1}\orcidID{0000-0001-9072-6689}}

\authorrunning{S. Drossopoulou, J. Noble, et al.}

\institute{Imperial College London
\email{\{scd,susan\}@imperial.ac.uk} \and
Victoria University of Wellington
\email{\{julian.mackay,kjx\}@ecs.vuw.ac.nz}}

\maketitle

\begin{abstract}
Functional specifications describe what
program components \emph{can} do: the \emph{sufficient} conditions to
invoke a component's operations.
They allow us to reason about the  use  of  
components in  the \emph{closed world} setting, where
the component interacts with known client code, and 
where the client code must establish the appropriate pre-conditions 
before calling into the component.
%

Sufficient conditions are not enough to reason about 
the use of components in the \emph{open world} setting, where
the component interacts with external code, possibly of unknown
provenance, 
and where the  component itself may evolve over time. 
In this   open world  setting,
we must also consider the
\emph{necessary} conditions, \ie
what are the conditions without which an effect will \emph{not} happen. 
In this paper we
 propose a 
 language \Chainmail for writing \emph{holistic} specifications that
 focus on necessary conditions (as well as sufficient conditions). We
 give a formal semantics for \Chainmail, and the core of \Chainmail has been
  mechanised in the Coq proof assistant.
\end{abstract}


\section{Introduction}
Software guards our secrets, our money, our intellectual property,
our reputation \cite{covern}.  We entrust personal and
corporate information to software which works in an \emph{open} world, 
where  it interacts with 
third party software of unknown provenance, possibly buggy and potentially malicious.

This means we need our software to be \emph{robust}:
to behave correctly even if  used 
by erroneous or malicious third parties.
We expect that our bank will only make payments 
from our account if instructed by us, or by somebody we have authorised, 
that space on a web given to an advertiser will not be used
to obtain access to our bank details \cite{cwe}, or that a
concert hall will not book the same seat more than once.


While language mechanisms such as constants, invariants, 
object capabilities \cite{MillerPhD}, and 
ownership \cite{ownalias} 
make it \textit{possible} to write robust
programs, they cannot \textit{ensure} that programs are robust.
Ensuring robustness is difficult because it means 
different things for different systems: perhaps
that critical operations should only be invoked with the requisite authority;
perhaps that sensitive personal information should not be leaked; 
or perhaps that a resource belonging to one user should not be consumed by another.
To ensure robustness, we need ways to specify what robustness means for a 
particular program, and ways to demonstrate that the particular program 
adheres to its specific robustness requirements.

 \begin{figure}[htb]
 \begin{tabular}{lll} 
\begin{minipage}{0.45\textwidth}
\begin{lstlisting}
class Safe{
   field treasure 
   field secret 
   method take(scr){
      if (secret==scr) then {
         t=treasure
         treasure = null
         return t }  }
 }
\end{lstlisting}
\end{minipage}
  &\ \ \  \ \ \ \ \  \ \ \ \ \ \ &
\begin{minipage}{0.45\textwidth}
\begin{lstlisting}
class Safe{
   field treasure   
   field secret  
   method take(scr){
       $\mathit{... as\, version\,1 ...}$ 
   }
   method set(scr){
         secret=scr }
 }
\end{lstlisting}
\end{minipage} 
 \end{tabular}
  \vspace*{-0.95cm}
  \caption{Two Versions of the class \prg{Safe}}
 \label{fig:ExampleSafe}
 \vspace*{-0.65cm}
 \end{figure}
 

Consider the code snippets from Fig. \ref{fig:ExampleSafe}. Objects of
 class \prg{Safe}  hold a \prg{treasure} and a \prg{secret}, and  only
 the holder of the secret can remove the treasure from the safe.
 We show the code in two versions; both   have the same method \prg{take}, and the second version 
 has an additional method \prg{set}.
  We 
 assume \sophia{a dynamically} typed language (\sophia{so that our results are applicable to the statically
 as well as the dynamically typed setting}); 
 \secomment{We do not depend on the additional safety static typing provides, so we only assume a dynamically typed language.}
\sophia{that} fields are private in the sense of Java \sophia{(}\ie only
 methods of that class may read or write these fields);
 and that addresses are unforgeable \sophia{(}so  there is no way to guess a secret). 
 A classical Hoare triple describing the behaviour of \prg{take} would be:
 
  \vspace{.1in}
  
(ClassicSpec)$  \ \ $  $\triangleq$
\vspace{-.1in}
\begin{lstlisting}
   method take(scr)
   PRE:   this:Safe  
   POST:  scr=this.secret$\pre$  $\longrightarrow$ this.treasure=null 
               $\wedge$
          scr$\neq$this.secret$\pre$ $\longrightarrow$  $\forall$s:Safe.$\,$s.treasure=s.treasure$\pre$
 \end{lstlisting}
\vspace{-.2in}

(ClassicSpec)  expresses  that knowledge of the \prg{secret} is  \emph{sufficient} 
to remove the treasure, and that  
%
 \prg{take} cannot remove the treasure unless the secret is
provided. 
But it cannot preclude that \prg{Safe} -- or some other class, for that matter -- contains more methods 
which might make it possible to remove the treasure  without knowledge of the
secret.
\sophia{This is the problem with the second version of \prg{Safe}: it satisfies (ClassicSpec)}, but is 
not robust, \sd{as it is possible to overwrite the \prg{secret} of the \prg{Safe} and then use it to
remove the treasure.}
To express robustness requirements, we introduce \emph{holistic specifications}, and require that:
 
  \vspace{.01in}
(HolisticSpec)\ \  $\triangleq$\\ 
$\strut \hspace{.3in}   \forall \prg{s}. 
[\ \ \prg{s}:\prg{Safe} \wedge \prg{s.treasure}\neq\prg{null}   \wedge   \Future{\prg{s.treasure}=\prg{null}} $ \\ 
 $ \strut \hspace{5.3cm}     \longrightarrow \ \  \exists \prg{o}. [\  \External{\prg{o}} \wedge  \CanAccess{\prg{o}}{\prg{s.secret}}\ ]  \  \ ] \hfill $
\vspace{-.05in}

(HolisticSpec) mandates that for any safe \prg{s}
whose treasure is not \prg{null}, 
if some time in the future its treasure were to become \prg{null},
then at least one external object (\ie an object whose class is not \prg{Safe}) in the current configuration
has direct access to \prg{s}'s secret. This external object need not have caused the change in $\prg{s.treasure}$ but it would 
 have (transitively) passed access to the secret which ultimately did cause that change.
Both classes in Fig. \ref{fig:ExampleSafe} satisfy (ClassicSpec), but the second version does not satisfy 
(HolisticSpec).
%
%
\vspace{.02in}
%
%

In this paper we propose \Chainmail, a specification language to
express holistic specifications.
The design of \Chainmail was guided by the study of a sequence of
examples from the object-capability literature and the smart contracts world: the
membrane \cite{membranesJavascript}, the DOM \cite{dd,ddd}, the Mint/Purse \cite{MillerPhD}, the Escrow \cite{proxiesECOOP2013}, the DAO \cite{Dao,DaoBug} and
ERC20 \cite{ERC20}.  As we worked through the
examples, we found a small set of language constructs that let us
write holistic specifications across a range of different contexts.
\Chainmail extends 
traditional program specification languages \cite{Leavens-etal07,Meyer92} with features which talk about:
\begin{description}
\item[Permission: ] 
Which objects may have access to which other objects; 
this is central since access to an object grants access to the functions it provides.
\item[Control: ]
Which objects called functions on other objects; this
 is useful in identifying the causes of certain effects - eg 
funds can only be reduced if the owner called a payment function.
%
%
\item[Time: ]
What holds some time in  the past, the future, and what changes with time,
\item[Space: ]
Which parts of the heap are considered when establishing some property, or when 
performing program execution; a concept
related to, but different from, memory footprints and separation logics,
\item[Viewpoint: ]
Which objects and which configurations are internal to our component, and which  are
external to it;
a concept related to the open world setting.
\end{description}


While many individual features of \Chainmail can be found in other work, 
their power and novelty for specifying open systems lies in their careful combination.
The contributions of this paper are:
\begin{itemize}
\item the design of the holistic specification language \Chainmail,
\item the semantics of \Chainmail, 
\item a Coq mechanisation of the core of \Chainmail.
\end{itemize}

The rest of the paper is organised as follows:
Section~\ref{sect:motivate:Bank} 
\sd{gives an example from the literature} which we will use 
to elucidate key points of \Chainmail.
~\ref{sect:chainmail} presents the \Chainmail\ specification
language.  Section~\ref{sect:formal} introduces the formal model
underlying \Chainmail, and then section~\ref{sect:assertions} defines
the 
semantics of \Chainmail's assertions.
Section~\ref{sect:discussion}
discusses our design, \ref{sect:related} considers related
work, and section~\ref{sect:conclusion} concludes.
We relegate key points of exemplar problems and various details to appendices which are available at~\cite{examples}.

\section{Motivating Example: The Bank}
\label{sect:motivate:Bank}

As a motivating example, we consider a simplified banking application 
taken from the object capabilities literature \cite{ELang}:
 \prg{Account}s belong to \prg{Bank}s and hold money (\prg{balance}s);  
with access  to two \prg{Account}s of the same  \prg{Bank} one can  transfer any amount of money from
 one to the other.  
This example has the advantage that it requires several objects and classes.

%
%

We will not show the code  here (see appendix~\ref{Bank:appendix}), but suffice it to say that class 
\prg{Account} has  methods \prg{deposit(src, amt)} and 
\prg{makeAccount(amt)}  (\ie a method called \prg{deposit}
with two arguments, and a method called \prg{makeAccount}
with one argument). Similarly, \prg{Bank} has method  \prg{newAccount(amt)}.
Moreover, \prg{deposit} requires that the receiver and
first argument  (\prg{this} and \prg{src}) are \prg{Account}s
and belong to the same bank,
that the second argument (\prg{amt}) is a number, and that \prg{src}'s
balance is at least \prg{amt}.
If this condition  holds, then 
 \prg{amt} gets transferred from \prg{src} to the receiver.
 The function \prg{makeNewAccount}  returns a fresh \prg{Account} with the same bank, and transfers \prg{amt}
 from the receiver \prg{Account} to the new \prg{Account}.
 Finally, the function \prg{newAccount} when run by a \prg{Bank} creates a new \prg{Account} with corresponding 
 amount of money in it.\footnote{{Note that our very limited bank specification doesn't even have the concept of an account owner.}}
\forget{
\emph{Aside} Notice that the specification means that access to an \prg{Account} allows anyone to withdraw all the money it holds
-- the concept of account owner who has exclusive right of withdrawal is not supported.
This simplified view allows  us to keep the example short, but compare with appendix \sophia{TO ADD}
for a specification which supports owners. \emph{end aside}\sophia{which is the best place to say that?}
}
%
%
%
%
It is not difficult to give formal specifications of these methods in terms of 
pre- and post-conditions. 

However,   what if the bank provided a \prg{steal} method that 
 emptied out every account in the bank into a thief's account?
%
%
%
The critical problem is that a bank implementation
 including a \prg{steal}
method could meet the functional specifications of 
\prg{deposit}, \prg{makeAccount}, and \prg{newAccount},
and still allow the clients' money to be stolen.
%

One obvious solution would be to adopt 
a closed-world
interpretation of specifications: we interpret \sd{functional} specifications  
as \emph{exact} in the sense that only
implementations that meet the functional specification exactly,
\emph{with no extra methods or behaviour}, are considered as suitable
implementations of the functional specification. The problem is that
this solution is far too strong: it would for example rule out a bank
that  during software maintenance was given a new method \prg{count}
that simply counted the number of deposits that had taken place, or a method \prg{notify}
to enable the bank to occasionally send notifications  to its customers.
%
%

What we need is some way to permit bank implementations that 
send notifications to customers, but to forbid implementations of \prg{steal}. 
The key here is to capture the (implicit)
assumptions underlying 
the design of the banking application.
 We provide
additional specifications that capture those assumptions.  The following
 three informal requirements   prevent methods like \prg{steal}:

\begin{enumerate}
\item 
An account's
  balance can be changed only  if a client   calls the \prg{deposit} method  with the
  account as the receiver or as an argument. 
\item An account's balance can be changed  only  if a client has
  access to that  particular account. 
\item The \prg{Bank}/\prg{Account} component does not leak access to existing accounts or banks. 
\end{enumerate}

Compared with the functional specification we have seen so far, these
requirements 
capture \emph{necessary} 
rather than
\emph{sufficient} conditions:  Calling the \prg{deposit}
method to gain access to an account 
is necessary for any change to that account taking place.
The  function 
\prg{steal} is inconsistent with requirement  (1), as it reduces the balance of an \prg{Account} without calling the
function \prg{deposit}. 
However, requirement  (1) is not enough to protect our money. We need to (2) to avoid an \prg{Account}'s balance getting
modified without access to the particular \prg{Account}, and (3) to ensure that such accesses are not leaked. 


We can  express these  requirements  
through \Chainmail assertions.  Rather than 
specifying the behaviour of particular methods when they are called, we
write  assertions   that range across the entire behaviour of the
\prg{Bank}/\prg{Account}  module.
\vspace{.2cm}


(1)\ \  $\triangleq$\ \ $\forall \prg{a}.[\ \ \prg{a}:\prg{Account} \wedge \Changes{\prg{a.balance}}  \ \    
    \longrightarrow \ \    \hfill$ \\
  $\strut \hspace{2.3cm} 
  \exists \prg{o}. [\    \Calls{\prg{o}}{\prg{deposit}}{\prg{a}}{\_,\_} \vee\  \Calls{\prg{o}}{\prg{deposit}}{\_}{\prg{a},\_}\  \ ] \ \ \ \ ] \hfill $

\vspace{.4cm}

    (2)\ \  $\triangleq$\ \ $\forall \prg{a}.\forall \prg{S}:\prg{Set}.\ [  \ \  \prg{a}:\prg{Account}\ \wedge \  \Using{ \Future{\,\Changes{\prg{a.balance}}} \,}{\prg{S}\,}\ \ \   \
    \longrightarrow$ \\
 $\strut \hspace{3.9cm} \hfill \exists \prg{o}.\ [\, \prg{o}\in \prg{S}\ \wedge\  \External{\prg{o}}  \ \wedge \ \CanAccess{\prg{o}}{\prg{a}} \, ] \ \ \ \ ]$
\vspace{.4cm} 
 
     (3)\ \  $\triangleq$\ \ $\forall \prg{a}.\forall \prg{S}:\prg{Set}.\ [ \ \  \prg{a}:\prg{Account}\ \wedge \ \ {\Using{\Future{\ \exists \prg{o}.[\ \External{\prg{o}} \ \wedge\ \CanAccess{\prg{o}}{\prg{a}}]}}{\SF}}$ \\  
 $\strut \hspace{3.9cm} \hfill   \longrightarrow \ \ \ \exists \prg{o}'.\ [\, \prg{o}'\in \prg{S}\  \wedge  \ \External{\prg{o}'}  \ \wedge \ \CanAccess{\prg{o}'}{\prg{a}}   \ ] \ \ \ \ ]$

\vspace{.2cm}

\noindent 
In the above and throughout the paper, we use an underscore ($\_$) to indicate an existentially bound variable whose 
value is of no interest.

\vspace{.2cm}

Assertion (1) 
says that if   an account's balance changes
($\Changes{\prg{a.balance}}$),
then there must be some client object \prg{o}
that 
called the \prg{deposit} method with \prg{a} as a receiver or as an argument 
($\Calls{\prg{o}} {\prg{deposit}} {\_} {\_}$).
 
Assertion (2) similarly constrains any possible change to an 
account's balance: 
If at some future point the balance changes  (${\Future{\,\Changes{...}}}$),  
and if this future change is observed with the state restricted to the objects from \SF~ (\ie $\Using{...}{\prg{S}}$), then 
at least one of these objects ($\prg{o}\in\SF$) is external to the \prg{Bank}/\prg{Account} system ($\External{\prg{o}}$) and 
has (direct) access to that account object
($\CanAccess{\prg{o}}{\prg{a}}$).
Notice that while the change in the \prg{balance} happens some time in the future,
the external object \prg{o} has access to \prg{a} in the \emph{current} state.
Notice also, that the object which makes the call to \prg{deposit} described in (1), and the object which 
has access to \prg{a} in the current state described in (2) need not be the same: It may well be that the
latter passes  a reference to \prg{a} to the former (indirectly), which then makes the call
to \prg{deposit}.

It remains to think about how access to an \prg{Account} may be obtained. This is the remit of assertion (3): 
It says that if at some time in the future of the state restricted to \SF, 
some object \prg{o} which is external has access to some account \prg{a}, and if \prg{a} exists in the 
current state, then in the current state some object 
from \SF~has access to \prg{a}. Where \prg{o} and $\prg{o}'$ may, but need not, be the same object. And where
 $\prg{o}'$ has to exist and have access to \prg{a} in the \emph{current} state, but 
\prg{o} need not exist in the current state -- it may be allocated later.
Assertion (3) thus gives essential protection when dealing with foreign, untrusted code.
When an \prg{Account} is given out to untrusted third parties, assertion (3) guarantees that
this \prg{Account} cannot be used to obtain access to further  \prg{Account}s. 

\vspace{.1cm}

A  holistic  specification for the bank account, then,
would be a sufficient functional specification
plus the necessary
specifications (1)-(3) from above. 
This holistic specification
permits an implementation of the bank that also provides  \prg{count}
and \prg{notify} methods, even though the specification does not mention either method.
Critically, though, the \Chainmail specification
does not permit an
implementation that includes a \prg{steal} method.
 %

\forget{
\sd{Thus, we need to be able to argue, that  passing the \prg{acm\_incoming to some unknown object \prg{u},
with an unknown function \prg{mystery}, is guaranteed not to affect \prg{acm\_acc} , unless \prg{u} already has
access to \prg{acm\_acc} before the call. With holistic assertions we can make this argument formally, while
with traditional specifications we cannot.}}
}
 
%

\section{\Chainmail\ Overview}
\label{sect:chainmail}
%
In this Section we  give a brief and informal  overview of some of the most salient features of  
\Chainmail -- a full exposition appears in Section \ref{sect:assertions}.

\sdparagraph{Example Configurations} We  will illustrate these features using the  \prg{Bank}/\prg{Account} example from the previous Section.
We   use the runtime configurations $\sigma_1$ and $\sigma_2$ 
shown in the left and right diagrams in Figure \ref{fig:BankAccountDiagrams}.
In both diagrams the rounded boxes depict objects:  green for those from the 
\prg{Bank}/\prg{Account} component, and grey for the ``external'',  ``client'' objects.
The transparent green rectangle  shows which objects are contained by the \prg{Bank}/\prg{Account} component.
The object at \prg{1} is a \prg{Bank}, those at \prg{2}, \prg{3} and \prg{4} are 
\prg{Account}s, and those at \prg{91}, \prg{92}, \prg{93} and \prg{94} are 
``client'' objects which belong to classes different from those from the \prg{Bank}/\prg{Account}  module.

Each configuration represents one alternative implementation of the Bank object.
Configuration  $\sigma_1$ may arise from execution using a module $M_{BA1}$, where  \prg{Account} objects
  have a field \prg{myBank} pointing to their \prg{Bank}, and an integer field  \prg{balance}
-- the code can be found in appendix~\ref{Bank:appendix} Fig.~\ref{fig:BanAccImplV1}.
Configuration  $\sigma_2$ may arise from execution using a module $M_{BA2}$,  where \prg{Account}s have a \prg{myBank}
field,  \prg{Bank} objects  have a \prg{ledger} implemented though a sequence of \prg{Node}s, each of which has a
 field pointing to an \prg{Account}, a field \prg{balance}, and a
 field \prg{next} -- the code can be found in appendix~\ref{Bank:appendix}
Figs.~\ref{fig:BanAccImplV2a} and~\ref{fig:BanAccImplV2b}.

\begin{figure}[htbp]
\begin{center}
\begin{tabular}{cc}
 \begin{minipage}{0.45\textwidth}
$\sigma_1$\\
 \includegraphics[width=\linewidth, trim=55  330 320 60,clip]{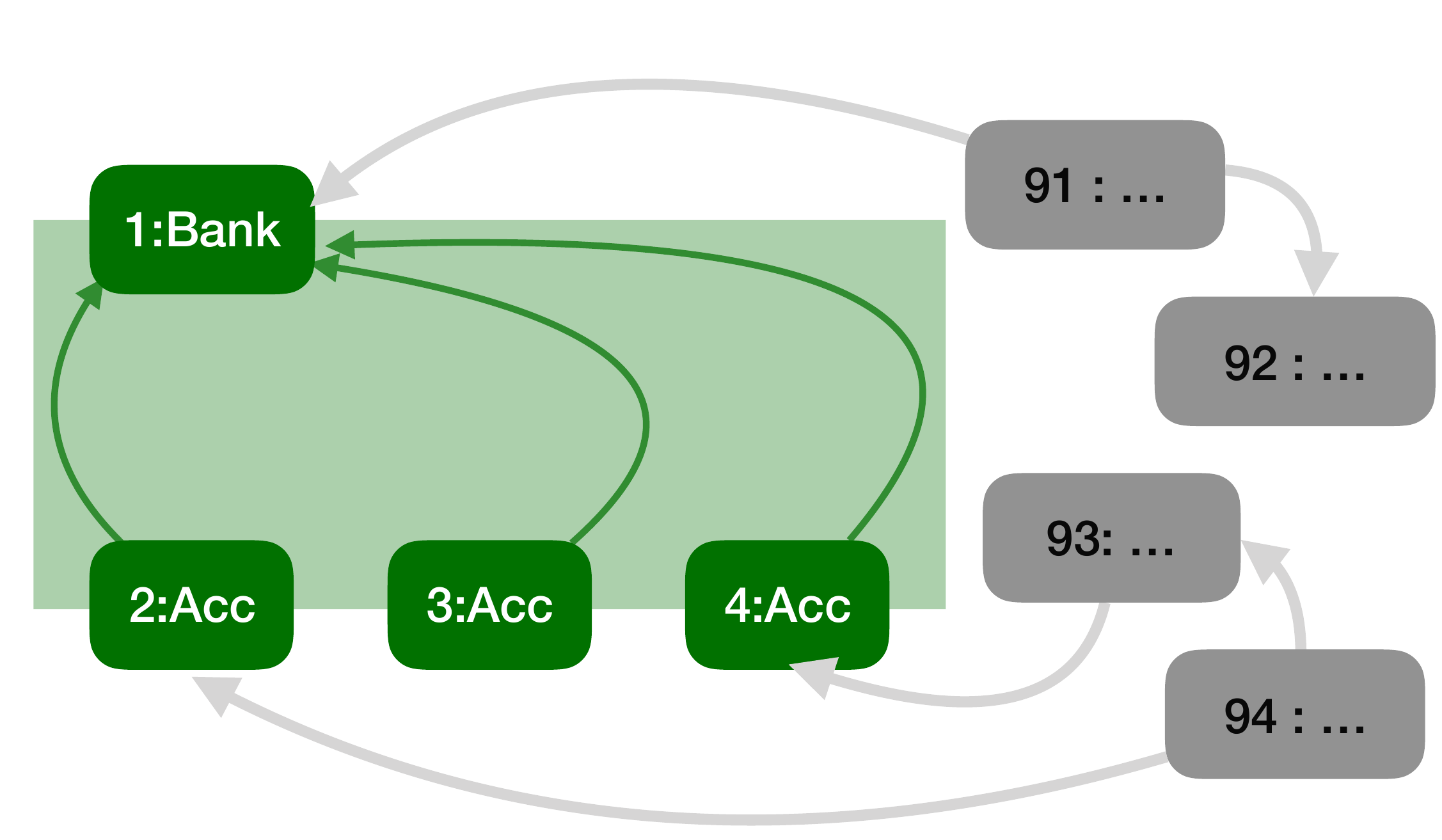}
   \end{minipage}
 &  
 \begin{minipage}{0.45\textwidth}
 $\sigma_2$\\
  \includegraphics[width=\linewidth, trim=55  330 320 60,clip]{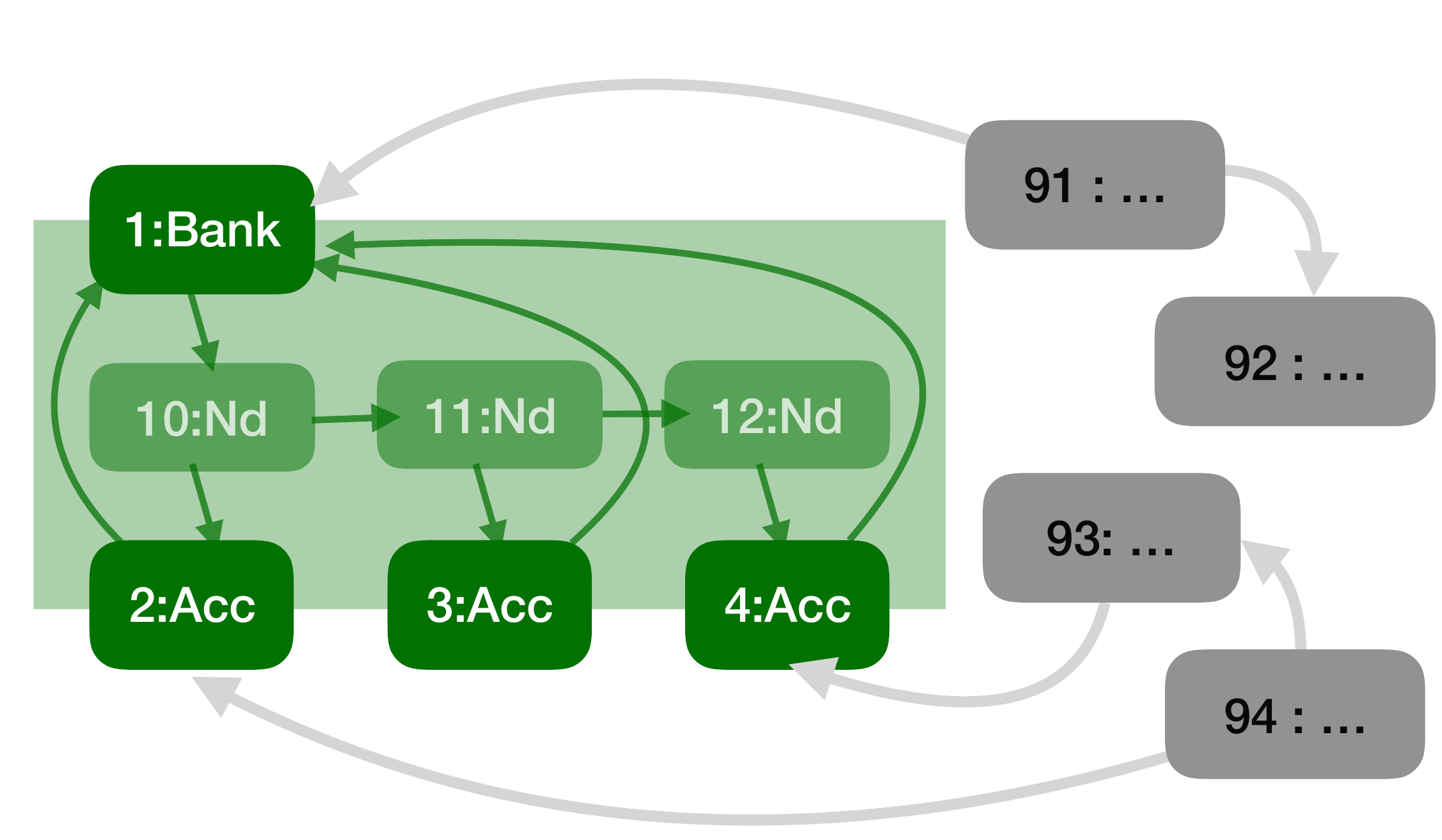}
   \end{minipage}
\end{tabular}
\end{center}
\caption{Two runtime configurations for the \prg{Bank}/\prg{Account} example. 
}
\label{fig:BankAccountDiagrams}
\end{figure}

\noindent 
For the rest,  assume variable identifiers $\prg{b}_1$, and $\prg{a}_2$--$\prg{a}_4$, and  $\prg{u}_{91}$--$\prg{u}_{94}$ denoting objects \prg{1}, \ \prg{2}--\prg{4},
and   \prg{91}--\prg{94} respectively for both $\sigma_1$ and $\sigma_2$.
That is, for $i$=$1$ or $i$=$2$, $\sigma_i(\prg{b}_1)$=\prg{1}, 
$\sigma_i(\prg{a}_2)$=\prg{2}, $\sigma_i(\prg{a}_3)$=\prg{3},  $\sigma_i(\prg{a}_4)$=\prg{4},  
  $\sigma_i(\prg{u}_{91})$=\prg{91}, $\sigma_i(\prg{u}_{92})$=\prg{92},   
  $\sigma_i(\prg{u}_{93})$=\prg{93}, and $\sigma_i(\prg{u}_{94})$=\prg{94}.

\sdparagraph{Classical Assertions} talk about the contents of the 
local variables (\ie the topmost stack frame), and the 
fields of the various objects (\ie the heap).  
  For example, the assertion  $\ \prg{a}_2.\prg{myBank}$=$\prg{a}_3.\prg{myBank} $, says that
  $\prg{a}_2$ and  $\prg{a}_3$  have the same bank. In fact, this assertion is
  satisfied in both $\sigma_1$ and $\sigma_2$, written formally as\\
  $\strut$ \hspace{1.1cm}  $...,\sigma_1 \ \models \ \prg{a}_2.\prg{myBank}=\prg{a}_3.\prg{myBank}$
  \\
 $\strut$ \hspace{1.1cm}  $...,\sigma_2 \ \models \ \prg{a}_2.\prg{myBank}=\prg{a}_3.\prg{myBank}$.

  The term \prg{x}:\prg{ClassId} says that \prg{x} is an object of class \prg{ClassId}. For example\\
  $\strut$ \hspace{1.1cm}  $...,\sigma_1 \ \models \ \prg{a}_2.\prg{myBank} : \prg{Bank}$.
  
  We support ghost fields~\cite{ghost,Leavens-etal07}, 
   \eg $\prg{a}_1$.\prg{balance} is a physical field in $\sigma_1$ and a ghost field in $\sigma_2$ since in \prg{MBA2} an \prg{Account} does not store its \prg{balance} (as can be seen in appendix~\ref{Bank:appendix}
Fig.~\ref{fig:BanAccImplV2a}). 
%
We also support the usual logical connectives, and so, we can express assertions such as \\
$\strut$ \hspace{1.1cm}    $\forall \prg{a}. [ \ \ \prg{a}:\prg{Account} \ \longrightarrow \ \ \prg{a}.\prg{myBank}:\prg{Bank}\ \wedge\  \prg{a}.\prg{balance}\geq 0\ \ ] $ .

\sdparagraph{Permission: Access}
Our first holistic assertion, $\CanAccess{\x}{\y}$, asserts that  
object $\x$ has a direct reference to another object $\y$: either one
of $\x$'s fields contains a 
reference to $\y$, or the receiver of the currently executing method is \prg{x}, and \prg{y}
is one of the arguments or a local variable. 
For example:\\
 $\strut$ \hspace{1.1cm}  $...,\sigma_1 \ \models \  \CanAccess{\prg{a}_2}{\prg{b}_1}$
\\
If  $\sigma_1$ 
were executing the method body corresponding to the call ${\prg{a}_2}$.\prg{deposit}\prg{(}${\prg{a}_3}$,\prg{360}\prg{)},  then
we would 
  have\\
 $\strut$ \hspace{1.1cm}  $...,\sigma_1 \ \models \  \CanAccess{\prg{a}_2}{\prg{a}_3}$, \\
 Namely, during execution of \prg{deposit}, the object  at   $\prg{a}_2$ has access to the object at $\prg{a}_3$, and could,
  if the method body chose to,  call a method on $\prg{a}_3$ , or  store a reference to $\prg{a}_3$ in its own fields. 
 \sophia{Access} is not symmetric, nor transitive:\\
  $\strut$ \hspace{1.1cm}  $...,\sigma_1 \ \not\models \  \CanAccess{\prg{a}_3}{\prg{a}_2}$, \hspace{0.6cm}\\
  $\strut$ \hspace{1.1cm} 
  $...,\sigma_2 \ \models \  \CanAccessTr{\prg{a}_2}{\prg{a}_3}$, \hspace{0.6cm}
 $...,\sigma_2 \ \not\models \  \CanAccess{\prg{a}_2}{\prg{a}_3}$.


\sdparagraph{Control: Calls}
The  assertion $\Calls {\x} {\y} {\m} {\zs}$
 holds 
in configurations where a method on object 
${\x}$ makes a method call ${\y}.{\m}({\zs})$ --- that is it calls method 
{\m} with object {\y} as the receiver, and with arguments {\zs}.
For example, \\
 $\strut$ \hspace{1.1cm}  $...,\sigma_3 \models \  \Calls {\x}{\prg{deposit}}  {\prg{a}_2} { {\prg{a}_3},\prg{360}}$.\\
 means that the receiver in 
 $\sigma_3$ is \x, and that
 $\prg{a}_2.\prg{deposit}{\prg{(}}\prg{a}_3,\prg{360}{\prg{)}}$
 is the next statement to be executed.


\sdparagraph{Space: {In}}
The space assertion $\Using{\A}{\SF}$ establishes validity of $\A$ 
 in a configuration  restricted to the 
objects from the set \SF.
For example, 
if  object \prg{94}  is included in $\SF_1$ but not in  $\SF_2$, then we   have\\ 
 $\strut$ \hspace{1.1cm}  $..., \sigma_1 \ \models \ \Using{ (\exists \prg{o}.\,\CanAccess{\prg{o}}{\acc_4})}{\SF_1}$
\\ 
 $\strut$ \hspace{1.1cm}  $..., \sigma_1 \ \not\models \ \Using{ (\exists \prg{o}.\,\CanAccess{\prg{o}}{\acc_4})} {\SF_2}$.\\
 The set \SF\ in the assertion $\Using{\A}{\SF}$  is therefore {\em not} the footprint of   $\A$;
  it is more like the \emph{fuel}\cite{stepindex}  given to establish that assertion. Note that  $..., \sigma \ \models \Using {\A} {\SF}$ does not imply  
  $..., \sigma \ \models \A$  nor does it imply   $..., \sigma \ \models \Using {\A} {\SF\cup\SF'}$.
  The other direction of the implication does not hold either.

\sdparagraph{Time: Next, Will, Prev, Was}
We support several operators from temporal
logic: ($\Next \A$, $\Future \A$,  $\Prev \A$, and $\Past \A$) to
talk about the future or the past in one or more \sophia{steps}.
The assertion $\Future \A$ expresses that 
$\A$ will hold in one or more steps. For example, 
taking $\sigma_4$ to be similar to  $\sigma_2$, the next statement to be executed 
to be  $\prg{a}_2.\prg{deposit}{\prg{(}}\prg{a}_3,\prg{360}{\prg{)}}$, and 
$\M_{BA2}\mkpair ..., \sigma_4 \ \models \  \acc_2.\bal=\prg{60}$,  and that
$\M_{BA2}\mkpair ..., \sigma_4 \ \models \  \acc_4.\bal\geq\prg{360}$,
then\\ 
 $\strut$ \hspace{1.1cm}  $\M_{BA2}\mkpair ..., \sigma_4 \ \models \ \Future{ {\acc_2.\bal}=\prg{420}}$.\\
The \emph{internal} module, $\M_{BA2}$ is needed for looking up the method body of \prg{deposit}.
  
\sdparagraph{Viewpoint: --  External}
The assertion $\External {\prg{x}}$ expresses that the object at {\prg{x}} does not belong to the module under consideration. 
  For example, \\
$\strut$ \hspace{1.1cm}  $\M_{AB2}\mkpair ..., \sigma_2 \ \models \ \External{\pu_{92}}$,
\hspace{1cm}  $\M_{AB2}\mkpair ..., \sigma_2 \ \not\models \ \External{\acc_2}$, \\
$\strut$
 \hspace{1.1cm}  $\M_{AB2}\mkpair ..., \sigma_2 \ \not\models \ \External{\pb_{1}.\prg{ledger}}$\\
The \emph{internal} module, $\M_{BA2}$, is needed to judge which objects are internal or external.
 
\sdparagraph{Change and Authority:} We have used 
$\Changes {...}$ 
in our \Chainmail assertions in section~\ref{sect:motivate:Bank}, as in
 $\Changes  {\acc.\prg{balance}}$. Assertions that talk about change, or give conditions for change
to happen are fundamental for security; the ability to cause change is called \emph{authority} in \cite{MillerPhD}. 
We could encode change using the other features of \Chainmail, namely, for any expression \e: 

$\strut$ \hspace{1.1cm}
$\Changes {\e}$\  \ $\equiv$\ \ $\exists v. [\ \e=v \wedge \Next {\neg ( \prg{e}=v)}\ ]$.\\
and similarly for assertions.
%

%
%
%
%

\sdparagraph{Putting these together} We now look at some composite assertions which use  
 several features from above. The assertion below 
says that if the statement to be executed is   $\acc_2.\prg{deposit}\prg{(}\acc_3,\prg{60}\prg{)}$,
then the balance of $\acc_2$ will eventually change:\\
\noindent $\M_{BA2}\mkpair ..., \sigma_2 \ \models \ {\Calls {..}   {\prg{deposit}} {\acc_2} {\acc_3,\prg{60}} } \longrightarrow {\Future{ \Changes {\acc_2.\bal}}}$.

\vspace{.2cm}
 
Now look deeper into   space assertions, $\Using{\A}{\SF}$: They allow us to characterise the set of objects which have authority over certain effects (here $\A$). In particular,  the assertion   $\Using {\Future {\A}} {\SF}$  requires two things: i) that $\A$ will hold in the future, and ii)  that the objects which cause the effect which will make $\A$ valid, are   included in \SF.
Knowing who has, and who has not, authority over properties or data is a fundamental concern of robustness
\cite{MillerPhD}. Notice that the authority is a set, rather than a single object: quite often it takes \emph{several objects in concert}
 to achieve an effect.

Consider assertions (2) and (3) from the previous section. They
  both have the form ``$\Future {\Using {\A} {\SF}}\ \longrightarrow P(\SF)$'',
  where $P$ is some property over a set. These \sophia{assertions} say that if
  ever in the future $\A$ becomes valid, and if the objects involved
  in making $\A$ valid are included in $\SF$, then $\SF$ must satisfy
  $P$. Such assertions can be used to restrict whether $\A$ will
  become valid. If we have some execution which only involves objects which do not satisfy $P$, then we know that the execution will not ever make $\A$ valid.


 
\sdparagraph{In summary,} in addition to classical 
logical connectors and classical assertions over the contents of the heap and the stack, 
our holistic assertions draw from some concepts from object capabilities
($\CanAccess{\_}{\_}$  for  permission; {$\Calls {\_} {\_} {\_} {\_}$ and  $\Changes{\_}$ for
authority) 
as well as temporal logic ($\Future \A$, $\Past \A$ and friends), and the relation of
our spatial connective ($\Using{\A}{S}$)  with ownership and effect
systems \cite{typeEffect,ownalias,ownEncaps}.

The next two sections  discuss the semantics of \Chainmail. Section \ref{sect:formal}
contains an overview of the formal model and section \ref{sect:assertions} focuses on the most important part of \Chainmail : assertions.



\section{Overview of the Formal foundations}
\label{sect:formal}
We now give an overview of the formal model for \Chainmail. In section \ref{sect:overviewmodel} we 
introduce  the shape of the judgments used to give semantics to \Chainmail, while in section \ref{sect:PL} 
we describe the most salient aspects of an underlying programming language used in \Chainmail.

\subsection{\Chainmail\ judgments}
\label{sect:overviewmodel}

Having outlined the ingredients of our holistic specification
language, the next question to ask is: When does a module $\M$ satisfy
a holistic assertion $\A$?  More formally: when does
$\M \models \A$ 
hold? 
  
Our answer has to reflect the fact that we are dealing with an  
\emph{open  world},  where  $\M$, our module, may be
linked with \textit{arbitrary untrusted code}.
%
%
%
%
To 
 model the open world, we consider
 pairs of modules, 
$\M \mkpair {\M'}$,  where $\M$ is the module 
whose code is supposed to satisfy the assertion,
and $\M'$  is  another 
 module which exercises
the functionality of $\M$. We call our module $\M$ the {\em internal} module, and
 $\M'$ the {\em external} module, which represents potential
 attackers or adversaries.
     
We can now answer the question: $\M \models \A$ 
 holds if for all further, {\em potentially adversarial}, modules $\M'$ and in  all runtime configurations $\sigma$ which may be observed as arising from the  execution of the code of $\M$ combined with that of $\M'$, the assertion $\A$ is satisfied. More formally, we define:\\
$~ \strut  \hspace{1.3in} \M \models \A \ \ \  \ \ \ \ \ \mbox{
if               } \ \ \  \ \ \  \  \forall \M'.\forall \sigma\in\Arising
{\M \mkpair  {\M'}}. [\ \M \mkpair  {\M'},\sigma \models \A\ ]$.  \\
Module $\M'$ represents all possible clients of {\M}.  As it is arbitrarily chosen, it reflects the open world nature of our specifications.%


The judgement $\M \mkpair  {\M'},\sigma \models \A$ means that  
assertion $\A$ is satisfied by  $\M \mkpair  {\M'}$ and $\sigma$.  
As in traditional specification languages \cite{Leavens-etal07,Meyer92}, satisfaction is judged 
in the context of a runtime configuration $\sigma$; but in addition, it is judged in the context of the internal and external modules.
These are used to find   abstract functions defining ghost fields as well as  method bodies
needed when judging validity of temporal assertions such as
$\Future {\_}$.} 

We distinguish between internal and external modules. This 
\sophia{has two uses:}
First, 
\Chainmail\ includes the ``$\External{\prg{o}}$'' assertion to require
that an object belongs to the external module, as in the Bank
Account's assertion (2) and (3) in
section~\ref{sect:motivate:Bank}. Second, we adopt a version of
visible states semantics \cite{MuellerPoetzsch-HeffterLeavens06,larch93,Meyer97}, treating all
executions within a module as atomic.
We only record runtime configurations which are {\em external}
 to module $\M$, \ie those where the
 executing object (\ie the current receiver) comes from module $\M'$.
 Execution 
 has the form\\
 $~ \strut  \hspace{1.3in}    \M \mkpair  {\M'},\sigma \leadsto \sigma'$\\  
where we ignore all intermediate steps
 with receivers  internal to $\M$. 
%
In the next section we  shall 
outline the underlying programming language, and
define the judgment  $\M \mkpair  {\M'},\sigma \leadsto \sigma'$ and the set 
$\Arising {\M \mkpair  {\M'}}$.

\subsection{An underlying programming language, \LangOO}
\label{sect:PL}
\renewcommand{\appref}[1]{, c.f. Appendix, Def.\,\ref{#1}}
 
The meaning of \Chainmail assertions is parametric with an
underlying object-oriented programming language, with modules  as repositories of code, classes with fields, methods and
ghostfields, objects described by classes, a way to link  modules into larger ones, and a concept of 
program execution.\footnote{We believe that \Chainmail can be applied to 
any language with these features.}

We have developed   \LangOO, a minimal such object-oriented language, which we
outline in  this section. 
We  describe the novel aspects of \LangOO,, and 
summarise the more conventional parts, relegating  full, and mostly unsurprising,
definitions 
to Appendix \ref{app:LangOO},

Modules are central to \LangOO, as they are to \Chainmail. As modules are repositories
of code, we adopt the common formalisation of modules as maps from 
class identifiers to class definitions\appref{defONE}. We use the terms module and component in an
analogous manner to class and object respectively.  \LangOO is untyped 
-- \sophia{this has several reasons: Many popular programming languages are untyped.
The external module might be untyped, and so it is more
general to consider everything as untyped.
Finally, 
a solution that works for an untyped language 
will also apply to a typed language;' the converse is not true.
}

 Class
definitions consist of field, method and ghost field declarations\appref{def:syntax:classes}.
Method bodies are sequences of 
statements, which  can be field read or field assignments, object
creation, method calls, and return statements. 
Fields are private in the sense of C++: they can only be read or
written by methods of the current class.
This is enforced by the operational semantics, \cf Fig.  \ref{fig:Execution}.
We  discuss ghost fields in the next section.

Runtime configurations, $\sigma$,  contain   all the usual information about execution snapshots: the heap, and a
stack of frames. 
%
Each frame consists of a continuation, \prg{contn}, describing the remaining code to be executed by the
frame, and a map from
variables to values. Values are either addresses or sets of addresses; the latter 
are needed to deal with assertions which quantify over sets of objects, as
\eg (1) and (2) from section \ref{sect:motivate:Bank}.
We define {\emph{one-module} execution  through a judgment of the form $\M, \sigma \leadsto \sigma'$ in the Appendix, Fig.  \ref{fig:Execution}. 

We define a module linking operator \  $\link$ \  so that
$\M\link\M'$ is the union of the two modules, provided that their domains are disjoint\appref{def:link}.
As we said in section \ref{sect:overviewmodel}, we distinguish  between the internal and external module. \sophia{We consider execution  from the view of the
external module, and treat}  execution of 
methods from the internal module as atomic. For this, we define \emph{two-module execution}  based on
one-module execution as follows:

\begin{definition}
\label{def:execution:internal:external}
\label{def:module_pair_execution} 
Given runtime configurations $\sigma$,  $\sigma'$,  and a module-pair $\M \mkpair \M'$ we define
execution where $\M$ is the internal, and $\M'$ is the external module as below:
 
\begin{itemize}
\item
$\M \mkpair \M', \sigma \leadsto \sigma'$ \IFF
there exist  $n\geq 2$ and runtime configurations $\sigma_1$,  ...
$\sigma_n$, such that
\begin{itemize}
\item
$\sigma$=$\sigma_1$,\ \  \ \ and\ \ \ \ $\sigma_n=\sigma'$.
\item
$\M \link \M', \sigma_i \leadsto \sigma_{i+1}'$,\  \  for $1\leq i \leq n\!-\!1$
\item
$\ClassOf{\this} {\sigma}\not\in dom({\M})$,  \ \  \ \ and\ \ \ \
$\ClassOf{\this} {\sigma'} \not\in dom({\M})$,
\item
 $\ClassOf{\this} {\sigma_i} \in dom({\M})$,\ \ \ \ for $2\leq i \leq n\!-\!2$
\end{itemize}
\end{itemize}

\end{definition}
 
In the definition above,  $\ClassOf {\x} {\sigma} $ looks up the class of the object \sophia{stored} at \x\appref{def:interp}.
 For example, for $\sigma_4$ as in Section \ref{sect:chainmail} whose next statement to be executed 
 is  $\prg{a}_2.\prg{deposit}{\prg{(}}\prg{a}_3,\prg{360}{\prg{)}}$,  we would have 
 a sequence of configurations $\sigma_{41}$, ... $\sigma_{4n}$,  $\sigma_{5}$ so that the
  one-module execution gives
 $\M_{BA2}, \sigma_4 \leadsto \sigma_{41} \leadsto \sigma_{42} ... \leadsto \sigma_{4n}   \leadsto \sigma_{5}$.
This would correspond to an atomic evaluation in the two-module execution: \  \
 $\M_{BA2}\mkpair \M', \sigma_4 \ \leadsto \sigma_5$ (see Fig.\ref{fig:VisibleStates}; \sophia{where blue stands for $\sigma(this)\!\in\!M_1$,%
   and  orange for $\sigma(this)\!\in\!M_2$}).

\begin{figure}[htb]
  \vspace*{-2.5mm}
  \begin{center}
   \begin{minipage}{0.80\textwidth}
     \begin{center}
       \includegraphics[width=\linewidth]{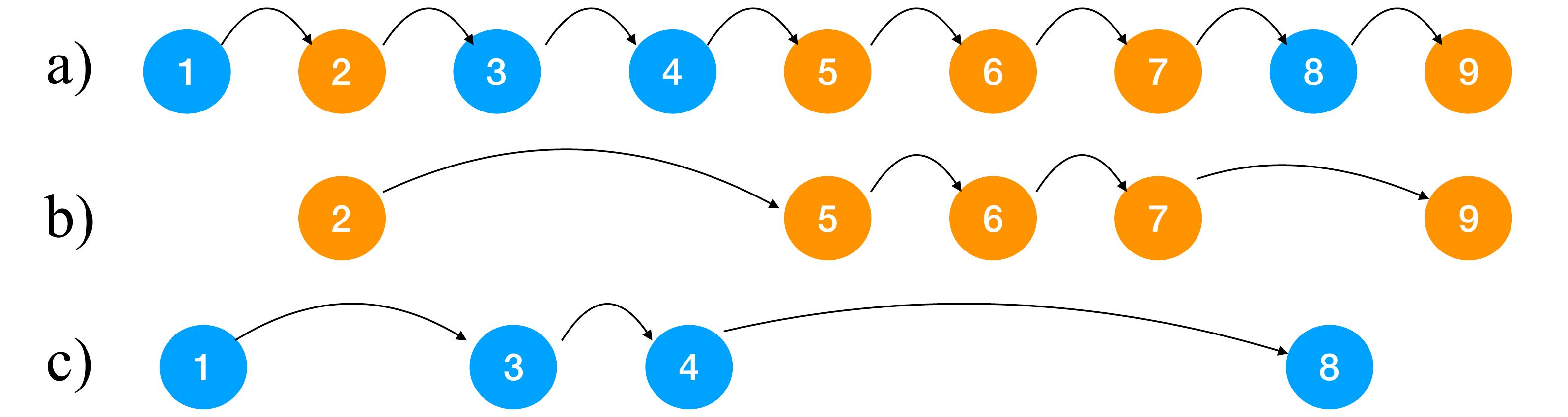}
     \end{center}
   \end{minipage}
   \end{center}
   \vspace*{-2.5mm}
   \caption{Two Module Execution
     (Def. \ref{def:execution:internal:external}). %
     a) $\M_1 \link \M_2$ b) $\M_1 \mkpair \M_2$ c) $\M_2 \mkpair \M_1$}
   \label{fig:VisibleStates}
 \end{figure}

Two-module execution  is related to 
visible states
semantics \cite{MuellerPoetzsch-HeffterLeavens06} as they both filter configurations, with  the difference
that in visible states semantics 
execution is unfiltered and  configurations are only filtered when it comes to the  consideration 
of class invariants} while {two-module execution filters  execution.
The lemma below says  that linking is associative and commutative, and preserves   both one-module and two-module execution.

\begin{lemma}[Properties of linking]
\label{lemma:linking}
 For any modules $\M$,   $\M'$, $\M''$, and $\M'''$ and runtime configurations $\sigma$, and $\sigma'$ we have$:$
 \label{lemma:linking:properties}

 \begin{itemize}
     \item
     $(\M \link \M')\link \M''$ = $\M \link (\M' \link \M'')$  \hspace{1cm} and    \hspace{1cm}   $\M \link \M'$  = $\M' \link\M$.
      \item
      $\M, \sigma \leadsto \sigma'$, and $\M\link \M'$ is defined, \  \ \ \ \  implies\ \ \ \ \   $\M\link \M', \sigma \leadsto \sigma'$.
 \item
 $\M \mkpair \M', \sigma \leadsto \sigma'$   \  \ \ \ \  implies\ \ \ \ \  $(\M\link\M'') \mkpair (\M'\link\M''') ,\sigma \leadsto \sigma'$.  
  \end{itemize}

 \end{lemma}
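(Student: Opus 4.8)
The plan is to prove the three bullets in turn, each by essentially unfolding the relevant definition. The first bullet is purely about the linking operator $\link$, which by Definition of $\link$ (the referenced \texttt{def:link}) is set-theoretic union of the underlying maps, defined whenever the domains are disjoint. Associativity and commutativity of $\link$ therefore reduce to associativity and commutativity of disjoint union of finite maps, together with the observation that the side-condition ``domains are disjoint'' is itself symmetric and that $(dom(\M)\cup dom(\M'))\cap dom(\M'')=\emptyset$ iff $dom(\M)\cap(dom(\M')\cup dom(\M''))=\emptyset$, so the two groupings are defined under the same circumstances. I would state this as a short calculation and move on.

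For the second bullet I would argue by induction on the derivation of the one-module execution judgment $\M,\sigma\leadsto\sigma'$ using the rules of Fig.~\ref{fig:Execution}. The key point is that every reduction rule looks up code (method bodies, ghost-field definitions, class membership for the field-privacy check) only in the module it is given, and each such lookup is \emph{monotone} under $\link$: if $\M$ maps a class identifier $c$ to a definition, then so does $\M\link\M'$, because $\link$ is union and $c\in dom(\M)$. Hence every rule instance valid for $\M$ is also a valid rule instance for $\M\link\M'$, and the derivation transports verbatim. (The field-privacy condition is a membership test $c\in dom(\M)$ that can only become ``more true'' after linking, but since $\M,\sigma\leadsto\sigma'$ already fired the rule, the class in question was internal and stays internal; no new class identifier is introduced by linking that could interfere, since $dom(\M)\subseteq dom(\M\link\M')$.) I expect this to be a routine rule-by-rule check.

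For the third bullet I would unfold Definition~\ref{def:execution:internal:external}. Given $\M\mkpair\M',\sigma\leadsto\sigma'$ we obtain $n\geq 2$ and intermediate configurations $\sigma_1,\dots,\sigma_n$ with $\sigma=\sigma_1$, $\sigma_n=\sigma'$, the per-step one-module executions $\M\link\M',\sigma_i\leadsto\sigma_{i+1}$, and the class-membership conditions on \prg{this} at the endpoints and interior. I would reuse these same $n$ and $\sigma_i$ to witness $(\M\link\M'')\mkpair(\M'\link\M'''),\sigma\leadsto\sigma'$. Three facts are needed. First, $(\M\link\M'')\link(\M'\link\M''')=(\M\link\M')\link(\M''\link\M''')$ by the first bullet, so the per-step one-module executions $\M\link\M',\sigma_i\leadsto\sigma_{i+1}$ lift to $(\M\link\M')\link(\M''\link\M'''),\sigma_i\leadsto\sigma_{i+1}$ by the second bullet. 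Second, the endpoint conditions $\ClassOf{\this}{\sigma}\notin dom(\M)$ and $\ClassOf{\this}{\sigma'}\notin dom(\M)$ must be upgraded to $\notin dom(\M\link\M'')=dom(\M)\cup dom(\M'')$; similarly the interior condition $\ClassOf{\this}{\sigma_i}\in dom(\M)$ must be upgraded to $\in dom(\M\link\M'')$, which is immediate from $dom(\M)\subseteq dom(\M\link\M'')$.

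The main obstacle is precisely the endpoint conditions: going from $\ClassOf{\this}{\sigma}\notin dom(\M)$ to $\ClassOf{\this}{\sigma}\notin dom(\M)\cup dom(\M'')$ is \emph{not} free. This is where the hypothesis that $(\M\link\M'')\mkpair(\M'\link\M''')$ is a well-formed module pair is used: for the linking operators on the right to be defined, $dom(\M),dom(\M'),dom(\M''),dom(\M''')$ must all be pairwise disjoint, and moreover $\sigma$ is (implicitly) a configuration arising in the context of the pair $\M\link\M''\mkpair\M'\link\M'''$, so $\prg{this}$ in $\sigma$ is an object of the external module $\M'\link\M'''$, whose class lies in $dom(\M'\link\M''')$, hence (by disjointness) outside $dom(\M\link\M'')$. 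I would make this precise by noting that the conclusion is quantified only over configurations $\sigma$ for which the two-module judgment is meaningful, i.e. external configurations of the pair, and that for such $\sigma$ the class of \prg{this} is automatically in the external domain; the interior steps then alternate correctly because the atomicity filter is driven by $dom(\M)$-membership and $dom(\M)\subseteq dom(\M\link\M'')$ preserves ``the receiver is internal''. With that observation the three bullets close; I would present the first two compactly and devote the bulk of the argument to spelling out the domain-disjointness bookkeeping in the third.
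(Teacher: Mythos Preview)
Your treatment of the first two bullets is correct and is essentially what the paper does: associativity and commutativity of $\link$ are immediate from Definition~\ref{def:link} (disjoint union of maps), and preservation of one-module execution under linking follows from monotonicity of method and class lookup, exactly as you outline. The paper's sketch is in fact terser than yours here.

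For the third bullet you are right to flag the endpoint conditions as the crux, and right that the interior condition $\ClassOf{\this}{\sigma_i}\in dom(\M)$ transports trivially to $dom(\M\link\M'')$. But your proposed fix for the endpoints does not go through. The lemma is stated for \emph{arbitrary} runtime configurations $\sigma$, and the hypothesis $\M\mkpair\M',\sigma\leadsto\sigma'$ yields only $\ClassOf{\this}{\sigma}\notin dom(\M)$; it does \emph{not} give $\ClassOf{\this}{\sigma}\in dom(\M')$. Your appeal to $\sigma$ being ``implicitly'' an external configuration of the pair $(\M\link\M'')\mkpair(\M'\link\M''')$ is precisely the conclusion you need to establish, so the argument is circular. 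Concretely, take $\sigma$ whose receiver has a class $C\in dom(\M'')$ with $C\notin dom(\M)\cup dom(\M')$, and a continuation that fires \textsf{fieldAssgn\_OS} (which consults no module lookup): then $\M\mkpair\M',\sigma\leadsto\sigma'$ holds with $n=2$, but the endpoint condition $\ClassOf{\this}{\sigma}\notin dom(\M\link\M'')$ fails.

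The missing ingredient is a well-formedness side condition---that every class occurring in $\sigma$ lies in $dom(\M\link\M')$---which does hold for arising configurations and, together with pairwise disjointness, would give $\ClassOf{\this}{\sigma}\in dom(\M')\subseteq dom(\M'\link\M''')$ and hence $\notin dom(\M\link\M'')$. The paper's one-line sketch (``follows from Def.~\ref{def:link} and monotonicity of lookup'') does not engage with this point either; your analysis is more careful than the paper's here, but you should state the needed side condition explicitly rather than smuggle it in as ``implicit''.
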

 
 We can now answer the question as to which runtime configurations are pertinent when judging a module's
adherence to an assertion.
\sophia{{\em Initial configurations}, are those whose heap  have only one object, of class \prg{Object}, and whose stack} have one frame, with arbitrary continuation.
\sophia{{\em Arising}} configurations are  those that can be reached by two-module execution, starting from any initial configuration.
 
\begin{definition}[Initial and Arising Configurations] are defined as follows: \label{def:arise}

\begin{itemize}
     \item
   $\Initial {(\psi,\chi)}$, \ \ if \ \ $\psi$ consists of a single frame $\phi$ with $dom(\phi)=\{ \this \}$, and there exists  some address $\alpha$, such that \ \ \    $\interp {\this}{\phi}$=$\alpha$, and \ $dom(\chi)$=$\alpha$,\  and\  
    $\chi(\alpha)=(\prg{Object},\emptyset)$.
 \item
 $\Arising  {\M\mkpair\M'} \ = \ \{ \ \sigma \ \mid \ \exists \sigma_0. \ [\  \Initial{\sigma_0} \  \ \wedge\ \  \M\mkpair\M', \sigma_0 \leadsto^* \sigma \ \ ] \ \ \} $
 \end{itemize}

\end{definition}

\section{Assertions}
\label{sect:assertions}

\Chainmail assertions (details in appendix \ref{sect:standard}) consist of (pure) expressions \e, comparisons between expressions, classical
assertions about the contents of heap and stack, the usual logical
connectives, as well as our holistic concepts.
In this section we focus on the novel,
 holistic, features of \Chainmail (permission, control, time, space, and viewpoint),
as well as our wish to support some form of recursion while keeping the logic of assertions classical.

\subsection{Satisfaction of Assertions - Access, Control, Space, Viewpoint}
\label{sect:pcsv} 

\textit{Permission} expresses that an object has the potential to call
methods on another object, and to do so directly, without  help from any
intermediary object. This is the case when the two objects are aliases, 
or the first object has a field pointing to the second object, or
the first object is the receiver of the currently executing method and the second object is one of the 
arguments or a local variable. Interpretations of variables and paths, $\interp {...} {\sigma}$, are defined
in the usual way (appendix Def.~\ref{def:interp}).

\begin{definition}[
Permission]  \label{def:valid:assertion:access}
For any modules $\M$, $\M'$,  variables \prg{x} and \prg{y}, we define
\begin{itemize}
\item
$\M\mkpair \M', \sigma \models  \CanAccess{\prg{x}}{\prg{y}}$   \IFF  \sd{$\interp {\x} {\sigma}$ and $\interp {\y} {\sigma}$ are defined}, and \begin{itemize}
\item
$\interp {\x} {\sigma}$=$\interp {\y} {\sigma}$, \ \ or
\item
$\interp {\x.\f} {\sigma}$=$\interp {\y} {\sigma}$, \  \ for some field \prg{f},  \ \ or
\item
$\interp {\x} {\sigma}$=$\interp {\this} {\sigma}$ and
  $\interp {\y} {\sigma}$=$\interp {\z} {\sigma}$,\ \ \ 
for some variable \z\ and  \z\ appears in  $\sigma$.\prg{contn}.
 \end{itemize}
\end{itemize}
\end{definition}

\noindent 
In the last disjunct, where \z\ is a parameter or local variable,
we  ask that   \z\ appears in the code being executed ($\sigma$.\prg{contn}).
This requirement 
ensures that variables which were introduced into the variable map 
in order to give meaning to existentially quantified assertions, are not considered.

\vspace{.2cm} \noindent
\textit{Control} expresses which object is the process of making a function call on another object and
with what arguments. The relevant information
\sophia{is} stored in the continuation (\prg{cont}) on the top frame.
\begin{definition}[
Control]  \label{def:valid:assertion:control}
For any modules $\M$, $\M'$,  variables \prg{x} , \y, $\prg{z}_1,...\prg{z}_n$, we define$:$
\begin{itemize}
   \item
$\M\mkpair \M', \sigma \models  \Calls {\prg{x}}  {\prg{m}} {\prg{y}}  {\z_1,...\z_n}$ \IFF \ \ \sd{$\interp {\x} {\sigma}$, $\interp {\y} {\sigma}$, $\interp {\z_1} {\sigma}$, ... $\interp {\z_n} {\sigma}$ are defined},\ \  and 
\begin{itemize}
\item
$\interp{\prg{this}}{\sigma}$=$\interp{\prg{x}}{\sigma}$, \ and
\item
$\sigma.\prg{contn}$=${\uu.\m(\v_1,..\v_n);\_}$,\ \ \ for some  $\uu$,$\v_1$,... $\v_n$, \ and
\item
 $\interp{\prg{y}}{\sigma}$=$\interp{\prg{u}}{\sigma}$,\ \ \ and \ \ \ 
  $\interp{\z_i}{\sigma}$=$\interp{{\prg{v}_i}}{\sigma}$, for all  $i$.
 \end{itemize}
  \end{itemize}
\end{definition}
\noindent 
Thus,  $\Calls {\prg{x}}  {\prg{m}} {\prg{y}}  {{\z_1,...\z_n}}$ expresses the  call 
$\y.\m(\z_1,...\z_n)$ will be executed next, and that the caller is \x.

\vspace{.2cm} \noindent
\sd{\textit{Viewpoint} is about whether an object is viewed as belonging to   the internal mode;
this is determined by the class of the object.}

 \begin{definition}[
Viewpoint]  \label{def:valid:assertion:view}
For any modules $\M$, $\M'$, and variable\x, we define
\begin{itemize}
 \item
$\M\mkpair \M', \sigma \models \External {\x}$ 
  \IFF
 \sd{$\interp {\x} {\sigma}$ is defined} and $\ClassOf {\interp{\x}{\sigma}} {\sigma} \notin dom(\M)$
\item
$\M\mkpair \M', \sigma \models \Internal {\x}$ 
  \IFF
  \sd{$\interp {\x} {\sigma}$ is defined} and $\ClassOf {\interp{\x}{\sigma}} {\sigma} \in dom(\M)$
\end{itemize}
\end{definition}
\noindent

\vspace{.2cm} 

\noindent
\sd{\textit{Space} is about  asserting that some property \A\ holds in a configuration whose objects are restricted to those
from a give set \SF. This way we can express that the objects from the set \SF\ have authority over the assertion \A.}
\sd{In order to define validity of $\Using {\A} {\prg{S}}$ in a configuration $\sigma$, 
we first define a restriction operation,  $\restrct {\sigma}{\prg{S}}$ which restricts the objects from $\sigma$ to only those
from $\SF$.
}

 \begin{definition}[Restriction of Runtime Configurations]  \label{def:restrict}
The restriction operator~$\;\restrct{} {} $ applied to a runtime configuration $\sigma$ and a variable  \prg{S} is defined as follows:
 \label{def:config:restrct}
 $~ $
\begin{itemize}
\item
$\restrct {\sigma}{\prg{S}}  \triangleq  (\sd{\psi}, \chi')$, \ if \  $\sigma$=$(\psi,\chi)$, \    $dom(\chi')=\interp {\prg{S}} {\sigma}$, and   
 $\forall \alpha\!\in\!dom(\chi').\chi(\alpha)=\chi'(\alpha)$.
\end{itemize}
\end{definition}

\begin{tabular}{cc}
 \begin{minipage}{0.45\textwidth}
\sd{For example, if we take $\sigma_2$ from Fig. \ref{fig:BankAccountDiagrams} in Section \ref{sect:motivate:Bank},
and restrict it with some set $\SF_4$ such that $\interp {\SF_4}{\sigma_2}=\{ 91, 1, 2, 3, 4, 11 \}$,
then the restriction $\restrct {\sigma_2}{\prg{S}_4}$ will look as on the right.}
  \end{minipage}
 &  
 \begin{minipage}{0.45\textwidth}
  \includegraphics[width=\linewidth, trim=55  330 320 60,clip]{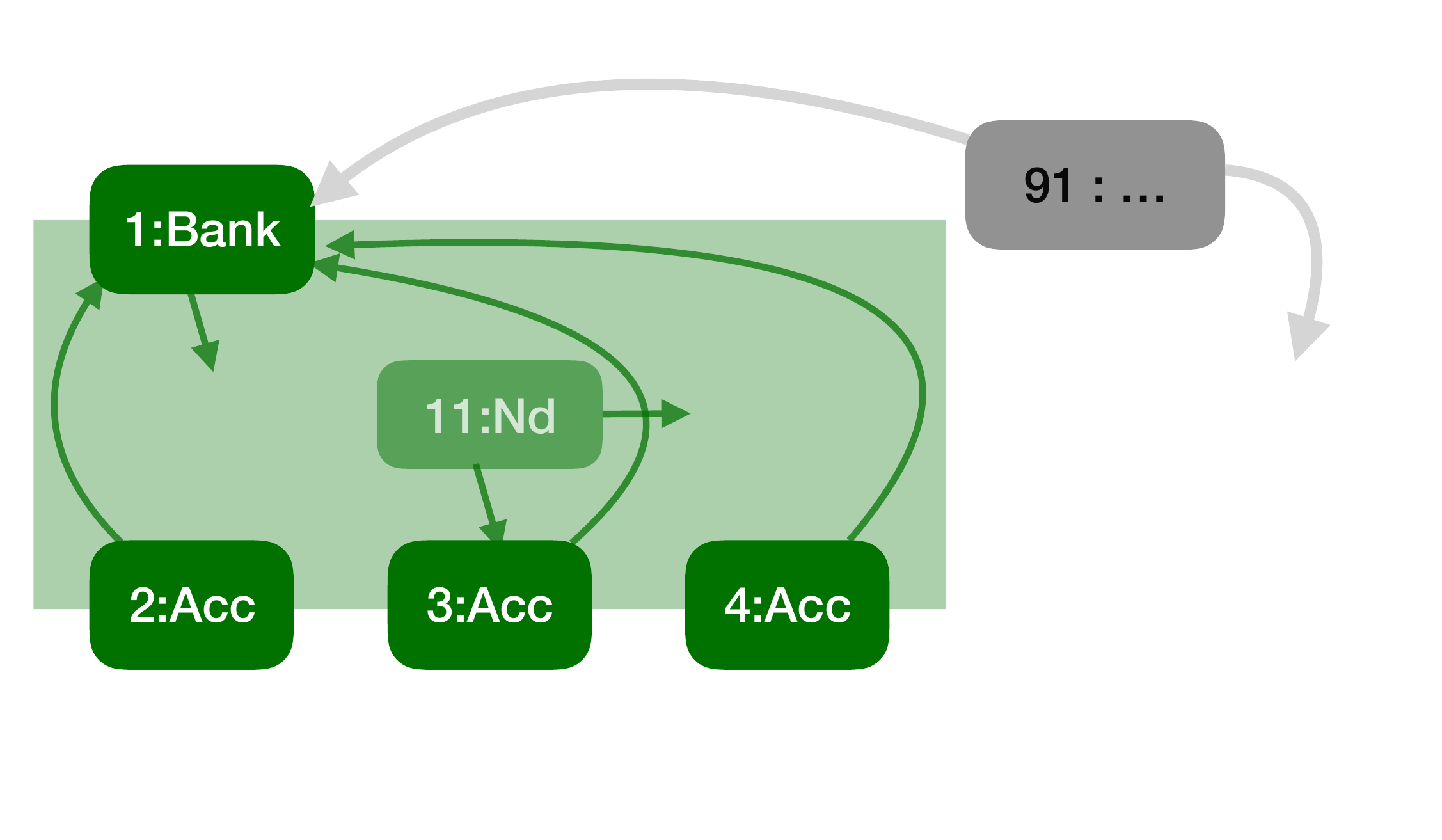}
   \end{minipage}
\end{tabular}
 
\sd{Note in the diagram above the dangling pointers at objects $1$,  $11$, and $91$ - reminiscent of the separation
of heaps into disjoint subheaps, as provided by the 
$*$ operator in separation logic \cite{Reynolds02}. The difference is, that in separation logic, the 
separation is provided through the assertions, where $\A * \A'$ holds in any heap  which can be split into
disjoint $\chi$ and $\chi'$ where $\chi$ satisfies $\A$ and $\chi'$ satisfies $\A'$. That is, in $\A * \A'$  the split of the 
heap is determined by the assertions   $\A$ and $\A'$ and there is an implicit requirement of disjointness, while in $\restrct {\sigma}{\prg{S}}$ the split is
determined by \SF, and no disjointness is required.
}

We now define the semantics of $\Using {\A} {\prg{S}}$.

\begin{definition}[Space]  \label{def:valid:assertion:space} 
For any modules $\M$, $\M'$, assertions $\A$ and variable \prg{S}, we define$:$
\begin{itemize}
\item
 $\M\mkpair \M', \sigma \models \Using {\A} {\prg{S}}$
 \IFF
 $\M\mkpair \M', \restrct \sigma {\prg{S}} \models  \A  $.
\end{itemize}
\end{definition}

\sd{
The set \SF\ in the assertion $\Using {\A} {\prg{S}}$ is related to  framing  from implicit dynamic frames \cite{IDF}:
 in an implicit dynamic frames assertion  \ $\textbf{acc}\, \prg{x.f}  * \A$, the frame $\prg{x.f}$ prescrives which locations
 may be used to determine validity of $\A$.
The difference is that frames are sets of locations (pairs of address and field), while our \prg{S}-es are sets of addresses.
 More  importantly,   implicit dynamic frames assertions whose frames are not large enough are badly formed, 
 while in our work, such assertions
are allowed and may   hold 
or not, \eg   
 $\M_{BA2}\mkpair \M', \sigma  \models \neg\ \Using {(\exists n.\acc_2.\prg{balance}=n)} {\prg{S}_4}$.
}

\subsection{Satisfaction of Assertions - Time}
\label{sect:time} 
\sd{To deal with time, we are faced with four challenges:  a) validity of assertions in the future or the past needs to be judged in the 
future configuration, but using the bindings from the current one, 
b) the current configuration needs to store the code being executed, so
as to be able to calculate future configurations, c) when considering the future, we do not want to observe 
configurations which go beyond the frame currently at the top of the stack, d) there is no "undo" operator to deterministically enumerate
all the previous configurations.}

\sd{\sophia{Consider  challenge  a)  in some more detail: the assertion
$\Future {\x.\f=\prg{3}}$  is satisfied in the \emph{current} configuration,   $\sigma_1$,
 if in some {\em future} configuration, $\sigma_2$, the field  \f\, of the object that is pointed at 
 by \x\, in the {\em current} configuration ($\sigma_1$) has the value \prg{3}, that is, if
$\interp{ \interp{\x}{\sigma_1}.f}{\sigma_2} = 3$,   even if
in that future configuration \x\ denotes a different object (i.e. if $\interp{\x}{\sigma_1}\neq \interp{\x}{\sigma_2}$).}}
\sd{
To address this, we define an  auxiliary concept:
the operator\ $\adapt$\,,   where}
 $\sigma_1 \adapt\, \sigma_2$ adapts the second configuration to the top frame's view of the former: 
 it returns a new configuration whose stack \sophia{comes from $\sigma_2$  but is augmented with the view from the  top frame} 
   from $\sigma_1$ and where the continuation   has been consistently renamed,. This allows us to interpret expressions  in   
   $\sigma_2$ but with the variables bound according to   $\sigma_1$; \eg we can obtain that value 
   of \prg{x} in configuration  $\sigma_2$ even if \prg{x} was out of scope in $\sigma_2$.  

 \begin{definition}[Adaptation] 
  \label{def:config:adapt}
 For runtime configurations $\sigma_1$, $\sigma_2$.$:$
 $~ $ 
\sophia{
\begin{itemize}
\item
$\sigma_1 \adapt \sigma_2 \triangleq (\phi_3\cdot\psi_2,\chi_2)$  \IFF 
\begin{itemize}
\item
  $ \phi_3=(\, \prg{contn}_2[\prg{zs}_2  / \prg{zs}' ], 
  \varMap_2[\prg{zs}'\mapsto \varMap_2({\prg{zs}}_2)][\prg{zs}_1\mapsto \varMap_1({\prg{zs}_1})] \, ) $,\ \ \ 
 where
\item
$\sigma_1=(\phi_1\cdot\_,\_)$,\ \ \  $\sigma_2= (\phi_2\cdot\psi_2,\chi_2)$, \ \ \  
$\phi_1$=$(\_,\varMap_1)$, \ \ \  $\phi_2$=$(\prg{contn}_2,\varMap_2)$, \ \ 
 and
\item
$\prg{zs}_1$=$dom(\varMap_1)$,  \ \  \ $\prg{zs}_2$=$dom(\varMap_2)$,  \ \  \ and \ \ \
\item
$\prg{zs}'$ is a set  of variables with  the  same cardinality as $\prg{zs}_2$, and
all variables in
$\prg{zs}'$  are fresh in $\varMap_1$ and in $\varMap_2$.
\end{itemize}
\end{itemize}
}
\end{definition}

\sophia{That is, in the new frame $\phi_2$ from above, we keep the same continuation as from $\sigma_2$ but rename all
variables with fresh names $\prg{zs}'$,   and combine the variable map  $\beta_1$  from $\sigma_1$ 
with the variable map $\beta_2$ from $\sigma_2$  while avoiding names clashes through
 the renaming $[\prg{zs}'\mapsto \varMap_2(\prg{zs}_2)]$.}
The consistent renaming of the continuation allows the correct modelling of execution (as needed,   for the semantics of  nested time assertions, as \eg in $\Future {\x.\f=\prg{3} \wedge \Future {\x.\f=\prg{5}}}$).

\sd{
Having addressed challenge a) we turn our attention to the remaining challenges: We address challenge b) by storing the remaining code to be executed in \prg{cntn} in each frame. 
We address challenge c) by only taking the top of the frame when considering future executions.
Finally, we address challenge d) by considering only configurations which arise from initial configurations, and 
which lead to the current configuration. 
}

\begin{definition}[Time Assertions]  \label{def:valid:assertion:time}
For any modules $\M$, $\M'$, and assertion  $\A$ we define
  
\begin{itemize}
 \item
  $\M\mkpair \M', \sigma \models  \Next \A $
  \IFF
  $\exists \sigma'.\, [\ \ \M\mkpair \M',\phi \leadsto  \sigma' \ \wedge \M\mkpair \M',\sigma\adapt\sigma' \models \A \ \  ]$,
 \\
$\strut ~ \hspace{1.4in} $ \hfill  and where $\phi$ is
so that $\sigma$=$(\phi\cdot\_,\_)$.\item
  $\M\mkpair \M', \sigma \models  \Future \A $
  \IFF
  $\exists \sigma'.\, [\ \ \M\mkpair \M',\phi \leadsto^* \sigma' \ \wedge \M\mkpair \M',\sigma\adapt\sigma' \models \A \ \  ]$,
 \\
$\strut ~ \hspace{1.4in} $   \hfill   and where $\phi$ is
so that $\sigma$=$(\phi\cdot\_,\_)$.  
  \item
 $\M\mkpair \M', \sigma \models  \Prev \A $ \IFF
 $\forall \sigma_1, \sigma_2. [\ \ \Initial{\sigma_1}\ \wedge \   \M\mkpair \M', \sigma_1  \leadsto^*  \sigma_2 $\\
 $\strut ~ \hspace{2.1in}   \wedge \   \M\mkpair \M', \sigma_2  \leadsto   \sigma  
 \ \  \ \longrightarrow \ \ \   \
 \M\mkpair \M', \sigma\adapt\sigma_2  \models \A\ \
 ]$ 
 \item
 $\M\mkpair \M', \sigma \models  \Past \A $ \IFF
%
$\forall \sigma_1. [\ \ \Initial{\sigma_1}\ \wedge \  \M\mkpair \M',\sigma_1 \leadsto^* \sigma \longrightarrow $\\
 $\strut ~ \hspace{0.1in} $   \hfill   $(\ \  \exists \sigma_2.
 \M\mkpair \M',\sigma_1 \leadsto^* \sigma_2 \ \wedge\  \M\mkpair \M',\sigma_2 \leadsto^* \sigma \ \wedge \ 
 \M\mkpair \M', \sigma\adapt\sigma_2  \models \A\ \ 
 )]$ 
\end{itemize}
\end{definition}

%
 
 In general, $\Future {\Using {\A} {\SF}}$ is different from
  $\Using {\Future {\A}} {\SF}$.  Namely, in the former assertion, $\SF$ must contain
   the objects involved in reaching the future configuration as well as the objects needed to
    then establish validity of $\A$ in that future configuration. In the latter assertion, 
     $\SF$ need only contain the objects needed to establish $\A$ in that future configuration.
  For example, revisit Fig. \ref{fig:BankAccountDiagrams}, and take $\SF_1$ to consist of objects \prg{1}, \prg{2},   \prg{4}, \prg{93}, and \prg{94},
  and $\SF_2$ to consist of objects \prg{1}, \prg{2},   \prg{4}.  Assume that 
   $\sigma_5$ is like $\sigma_1$, that the next call in $\sigma_5$ is a method on $\prg{u}_{94}$, whose  body obtains the
  address of $\acc_4$ (by making a call on \prg{93} to which it has access), and the address of $\acc_2$ (to which it has access),
  and then makes the call $\acc_2.\prg{deposit}(\acc_4,360)$. Assume also     that $\prg{a}_4$'s balance is \prg{380}.
  Then\\
  $\strut$ \hspace{1.1cm}  $\M_{BA1}\mkpair ..., \sigma_5 \ \models \ \Using {\Future{ \Changes {\acc_2.\bal}}} {\SF_1}$\\
   $\strut$ \hspace{1.1cm}  $\M_{BA1}\mkpair ..., \sigma_5 \ \not\models \ \Using {\Future{ \Changes {\acc_2.\bal}}} {\SF_2}$\\
 $\strut$ \hspace{1.1cm}  $\M_{BA1}\mkpair ..., \sigma_5 \ \models \ \Future{ \Using {\Changes {\acc_2.\bal}} {\SF_2}}$\

\subsection{Properties of Assertions}

\label{sect:classical} 
We define equivalence of   assertions in the usual way: assertions $\A$ and $\A'$are equivalent if they are satisfied  in
the context of the same configurations and module pairs -- \ie\\
 \strut \hspace{1.1cm} $\A \equiv \A'\  \IFF\    \forall \sigma.\, \forall \M, \M'. \ [\ \ \M\mkpair \M', \sigma \models \A\ \mbox{ if and only if }\ \M\mkpair \M', \sigma \models \A'\ \ ].$\\
We can then prove that the usual equivalences hold, \eg\  $ \A \vee \A' \ \equiv \  \A' \vee \A$, and\   $\neg (\exists \prg{x}.\A )  \  \ \equiv \  \forall \prg{x}.(\neg  \A)$.
Our assertions are classical, \eg  $ \A \wedge\neg \A \ \equiv \  \prg{false}$, and $\M\mkpair \M', \sigma  \models \A$ and  $\M\mkpair \M', \sigma  \models \A \rightarrow \A'$  implies
$\M\mkpair \M', \sigma  \models \A '$. 
 \sd{This desirable property comes at the loss of some expected equivalences, \eg, in general, 
 $\e = \prg{false}$ and $\neg\e$ are not equivalent. 
 More  in Appendix \ref{app:assertions}.}

\subsection{Modules satisfying assertions}

Finally, we define satisfaction of assertions by modules: A module $\M$ satisfies an assertion $\A$ if for all modules $\M'$, in all configurations arising from executions of $\M\mkpair\M'$, the assertion $\A$ holds.

\begin{definition}
\label{def:module_satisfies}
For any module $\M$, and  assertion $\A$, we define:
\begin{itemize}
\item
$\M \models \A$ \IFF  $\forall \M'.\, \forall \sigma\!\in\!\Arising{\M\mkpair\M'}.\   \M\mkpair\M', \sigma \models \A$
\end{itemize}
\end{definition}








\section{Examplar Driven Design}
\label{sect:discussion} 


\paragraph{Examplars}

The design of \Chainmail was guided by the study of a sequence of
exemplars taken from the object-capability literature and the smart
contracts world:

\begin{enumerate}
\item \textbf{Bank} \cite{arnd18} - Bank and Account as in
Section~\ref{sect:motivate:Bank} with two different implementations.
\item
\textbf{ERC20} \cite{ERC20} - Ethereum-based token contract.
\item
\textbf{DAO} \cite{Dao,DaoBug} - Ethereum contract for Decentralised Autonomous
Organisation.
\item
\textbf{DOM} \cite{dd,ddd} - Restricting access to browser Domain Object Model\\
\end{enumerate}

\noindent
\sophia{We} present these exemplars as 
appendices \cite{examples}. Our design was also driven by work on other
examples such as the membrane \cite{membranesJavascript},
the Mint/Purse \cite{MillerPhD}, and 
Escrow \cite{proxiesECOOP2013,swapsies}.


  

\paragraph{Model}

We have constructed a Coq model \cite{coq} of the core of the Chainmail
specification language, along with the underlying \LangOO language.
Our formalism is organised as follows:
\begin{enumerate}
\item
The \LangOO Language: a class based, object oriented language with mutable references.
\item
Chainmail: The full assertion syntax and semantics defined in Definitions \ref{def:execution:internal:external}, \ref{def:arise}, \ref{def:valid:assertion:access}, \ref{def:valid:assertion:control}, \ref{def:valid:assertion:view}, \ref{def:restrict}, \ref{def:valid:assertion:space}, \ref{def:config:adapt}, \ref{def:valid:assertion:time} and \ref{def:module_satisfies}.
\item
\LangOO Properties: Secondary properties of the loo language that aid in reasoning about its semantics.
\item
Chainmail Properties: The core properties defined on the semantics of Chainmail.
\end{enumerate}

\sophia{In the associated appendix} (see Appendix \ref{sect:coq}) we list and present the properties of Chainmail we have formalised in Coq.
We have proven that Chainmail obeys much of the properties of classical logic. While we formalise most of the underlying semantics, we make several assumptions in our Coq formalism: (i) the law of the excluded middle,  a property that is well known to be unprovable in constructive logics, and (ii) the equality of variable maps and heaps down to renaming. Coq formalisms often require fairly verbose definitions and proofs of properties involving variable substitution and renaming, and assuming equality down to renaming saves much effort.

More details of the formal foundations of \Chainmail, and the model,
are also in appendices \cite{examples}.

\section{Related Work}
\label{sect:related}
\paragraph{Behavioural Specification Languages} 

Hatcliff et al.\ \cite{behavSurvey2012} provide an excellent survey of
contemporary specification approaches.  With a lineage back to Hoare
logic \cite{Hoare69}, Meyer's Design by Contract \cite{Meyer97} was the
first popular attempt to bring verification techniques to
object-oriented programs as a ``whole cloth'' language design in
Eiffel.  Several more recent specification languages are now making
their way into practical and educational use, including JML
\cite{Leavens-etal07}, Spec$\sharp$ \cite{BarLeiSch05}, Dafny
\cite{dafny} and Whiley \cite{whiley15}. Our approach builds upon
these fundamentals, particularly Leino \& Shulte's
formulation of
two-state invariants \cite{usingHistory}, and Summers and
Drossopoulou's Considerate Reasoning \cite{Considerate}.
In general, these approaches assume a closed system, where modules
can be trusted to cooperate. In this paper we aim to
work
in an open system where modules'
invariants must be protected irrespective of the behaviour of the rest
of the system.

\paragraph{Defensive Consistency}


In an open world, we cannot rely on the kindness of strangers: rather
we have to ensure our code is correct regardless of whether it
interacts with friends or foes.
Attackers 
\textit{``only have to be lucky once''} while secure systems 
\textit{``have to be lucky always''} \cite{IRAThatcher}.
Miller \cite{miller-esop2013,MillerPhD} defines the necessary approach
as \textbf{defensive consistency}: \textit{``An object is defensively
  consistent when it can defend its own invariants and provide correct
  service to its well behaved clients, despite arbitrary or malicious
  misbehaviour by its other clients.''}  Defensively consistent
modules are particularly hard to design, to write, to understand, and
to verify: but
they make it much
easier to make guarantees about systems composed of multiple components
\cite{Murray10dphil}.

\paragraph{Object Capabilities and Sandboxes.}
{{\em Capabilities} as a means to support the development of concurrent and distributed system  were developed in the 60's
by Dennis and Van Horn \cite{Dennis66}, and were adapted to the
programming languages setting in the 70's \cite{JamesMorris}. 
{\em Object capabilities} were first introduced~\cite{MillerPhD} in the early 2000s},
 and many recent 
studies manage
to verify  safety or correctness of object capability programs.
Google's Caja \cite{Caja} applies   sandboxes, proxies, and wrappers
 to limit components'
access to \textit{ambient} authority.
Sandboxing has been validated
formally: Maffeis et al.\ \cite{mmt-oakland10} develop a model of
JavaScript, demonstrate that it obeys two principles of
object capability systems
and show  how untrusted applications can be prevented from interfering with
the rest of the system. 
Recent programming languages 
\cite{CapJavaHayesAPLAS17,CapNetSocc17Eide,DOCaT14} including Newspeak
\cite{newspeak17}, Dart \cite{dart15}, Grace \cite{grace,graceClasses}
and Wyvern \cite{wyverncapabilities} have adopted the object
capability model.

\paragraph{Verification of Object Capability Programs}
Murray made the first attempt to formalise defensive consistency and
correctness~\cite{Murray10dphil}.  Murray's model was rooted in
counterfactual causation~\cite{Lewis_73}: an object is defensively
consistent when the addition of untrustworthy clients cannot cause
well-behaved clients to be given incorrect service.  Murray formalised
defensive consistency very abstractly, over models of (concurrent)
object-capability systems in the process algebra CSP~\cite{Hoare:CSP},
without a specification language for describing effects, such as what
it means for an object to provide incorrect service.  Both Miller and
Murray's definitions are intensional, describing what it means for an
object to be defensively consistent.

Dro\-sso\-pou\-lou and Noble \cite{capeFTfJP,capeFTfJP14} have
analysed Miller's Mint and Purse example \cite{MillerPhD} 
and discussed the six
capability policies 
as proposed in \cite{MillerPhD}.
In 
\cite{WAS-OOPSLA14-TR}, {they} 
sketched a  specification language,  \sd{used}  it to  
specify the six policies from \cite{MillerPhD}, 
\sd{showed} that several possible interpretations were possible, 
\sd{and} uncovered
the need for another four further policies.
They also
  sketched how 
a trust-sensitive 
example (the escrow exchange) could be verified in an open world
\cite{swapsies}. 
\sd{Their work does not support the concepts of control, time, or space, as in \Chainmail,
but it offers a primitive expressing trust.}
 
Swasey et al.\ \cite{dd}  have deployed
   \sd{powerful} 
  theoretical techniques to address similar problems:  
  \sd{They} show how step-indexing, Kripke worlds, and representing objects
as state machines with public and private transitions can be used to
reason about 
\sd{object capabilities}.
Devriese have demonstrated solutions to a range of exemplar problems,
including the DOM wrapper (replicated in our
section~\ref{sect:example:DOM}) and a mashup application.
\sd{Their} distinction
between public and private transitions 
\sd{is similar} to the
distinction between internal and external objects.

More recently, Swasey et al.\ \cite{ddd}  designed OCPL, a logic
for object capability patterns, that supports specifications and
proofs for object-oriented systems in an open world.  
\sd{They} 
draw on verification techniques for security and
information flow: separating internal implementations (``high values''
which must not be exposed to attacking code) from interface objects
(``low values'' which may be exposed).  OCPL supports defensive
consistency 
(\sd{they} use the term ``robust safety'' from the
security community \cite{Bengtson}) via a proof system that ensures
low values can never leak high values to external attackers. 
This means that low values \textit{can} be exposed to external code,
and the behaviour of the system is described by considering attacks only
on low values.  
\sd{They} use that logic to
prove a number of object-capability patterns, including
sealer/unsealer pairs, the caretaker, and a general membrane.

Schaefer et al.\ \cite{schaeferCbC} have recently
\sd{added}  support for information-flow security 
\sd{using} refinement to ensure correctness (in this case confidentiality) by
construction. 
By enforcing encapsulation, \sd{all} 
these approaches share similarity with techniques such as
ownership types \cite{ownalias,NobPotVitECOOP98}, which also
protect internal implementation objects from accesses that cross
encapsulation boundaries.  Banerjee and Naumann demonstrated that these
systems enforce representation independence (a property close to
``robust safety'') some time ago \cite{Banerjee:2005}.

\Chainmail\ differs from Swasey, Schaefer's, and Devriese's work in a number of ways:
\sd{They} are primarily concerned \sd{with} 
mechanisms that ensure encapsulation (aka 
confinement) while we abstract away from any mechanism via the
$\External{}$ predicate. 
\sd{They use powerful mathematical techniques
which  the users need  to understand in order to write their specifications,
while the \Chainmail users only need  to understand  first order logic and 
the holistic operators presented in this paper.}
\sd{ Finally, none of these systems offer the kinds of
holistic assertions addressing control flow, change, or temporal
operations that are at the core of \Chainmail's approach.
}

Scilla \cite{scillaOOPSLA19} is a minimalistic typed functional
language for writing smart contracts that compiles to the Ethereum
bytecode. Scilla's semantic model is restricted, assuming actor based
communication and restricting recursion,  thus facilitating static
analysis of Scilla contracts, and ensuring termination.
Scilla is able to demonstrate that a number of popular Ethereum
contracts avoid type errors, out-of-gas resource failures, and
preservation of virtual currency. 
Scilla's semantics are defined formally, but have not yet been represented in a
mechanised model.


Finally, the recent VerX tool is able to verify a range of
specifications for solidity contracts automatically \cite{VerX}.
Similar to \Chainmail, VerX has a specification language based on
temporal logic.  VerX offers three temporal operators (always, once,
prev) but only within a past modality, while \Chainmail\ has two
temporal operators, both existential, but with both past and future
modalities.   VerX specifications can also include predicates that
model the current invocation on a contract (similar to \Chainmail's
``calls''), can access variables, and compute sums (only) over
collections. \Chainmail\ is strictly more expressive as a
specification language, including quantification over objects and sets
(so can compute arbitrary reductions on collections) and of course
specifications for permission (``access''), space (``in'') and
viewpoint (``external'') which have no analogues in VerX. 
Unlike \Chainmail, VerX includes a practical tool that has
been used to verify   a hundred properties across case studies of
twelve Solidity contracts.

\section{Conclusions}
\label{sect:conclusion}
In this paper we have motivated the need for holistic specifications,
presented the specification language \Chainmail for writing such
specifications, and outlined the formal foundations of the language.
%
To focus on the key attributes of a holistic specification language,
\sd{we have kept  \Chainmail simple, only requiring an understanding of first order logic.}
\sd{We believe that the holistic features (permission, control, time, space and viewpoint),
are intuitive concepts 
when reasoning informally, and were pleased to have been able to provide their
formal semantics in what  we  argue is a simple manner.}

\section{Acknowledgments}

\sophia{This work is based on a long-standing collaboration with Mark M. Miller and Toby Murray.
We have received invaluable feedback from Alex Summers, Bart Jacobs, Michael Jackson, members of WG 2.3, 
and the FASE 2020  reviewers.
The work has been supported by the NZ Marsden grants xxxx, and yyy, and also research gifts from Agoric\susan{, the Ethereum Foundation,} and
Facebook.}

\bibliographystyle{splncs04}
\bibliography{Case,more}

\newpage
\appendix

\section{Foundations: Programming Language}
\label{app:LangOO}
\subsection{Modules and Classes}
\label{secONE}

\LangOO programs consist of modules, which are repositories of code. Since we study class based oo languages,
in this work, code is represented as classes, and  modules are mappings from  identifiers to class  descriptions.

\begin{definition}[Modules]
\label{defONE}
We define $\syntax{Module}$ as  the set of mappings from identifiers to class descriptions (the latter defined in Definition \ref{def:syntax:classes}):\\  

\begin{tabular}  {@{}l@{\,}c@{\,}ll}
\syntax{Module} \ \  &  \   $\triangleq $  \ &
   $ \{ \ \ \M \ \ \mid \ \  \M: \ \prg{Identifier} \   \longrightarrow \
  \  \syntax{ClassDescr}     \  \    \}$
 \end{tabular}
\end{definition}
 
Classes, as defined   below,
consist of field, method definitions and ghost field declarations.
 \LangOO is untyped, and therefore fields are declared without types, 
 method signatures and ghost field signatures consist of  sequences of parameters without types, and no return type.
 Method bodies consist of sequences of statements;
these can be field read or field assignments, object creation, method calls, and return statements.
All else, \eg booleans, conditionals, loops,  can be encoded.
Field read or write is only allowed \sd{if the object whose field is being read 
belongs to the same class as the current method. This is enforced by the operational semantics, \cf
Fig.  \ref{fig:Execution}.}
\sd{Ghost fields  are defined as implicit, side-effect-free functions with zero or more parameters. They are ghost information, \ie 
they are not directly stored in the objects, and are not read/written during execution. When such a ghostfield is
mentioned in an assertion, the corresponding function is evaluated. More in section \ref{sect:expressions}.
Note that the expressions that make up the bodies of ghostfield declarations (\prg{e}) are more complex than the terms that 
appear in individual statements.}

From now on we expect that the set of field and the set of ghostfields defined in a class are disjoint.  

\label{sec:syntax:classes}

\begin{definition}[Classes]
\label{def:syntax:classes}
Class descriptions consist of field declarations, method declarations, and ghostfield  declarations.
 
\begin{tabular}{lcll}
 \syntax{ClassDescr}   &   \BBC  &     \kwN{class}  \syntax{ClassId}    \lb\,  $($\ \ \syntax{FieldDecl} $)^*$ \
 $($  \syntax{MethDecl}\ $)^*$   \   $($   \syntax{GhosDecl}\ $)^*$ \ \ \rb
\\
\syntax{FieldDecl} &\BBC& \kwN{field} \f \\
\syntax{MethDecl} &\BBC&
     \kwN{method}\    \m\lp \x$^*$\rp     \lb\, \syntax{Stmts}  \,
    \rb
 \\
 \syntax{Stmts}  &\BBC&  \syntax{Stmt}     ~\SOR~  \syntax{Stmt} \semi \syntax{Stmts} \\
\syntax{Stmt}    &\BBC&
      \x.\f {\kw{:=}} \x   ~\SOR~  \x{\kw{:=}}  \x.\f    ~\SOR~        \x  {\kw{:=}} \x.\m\lp \x$^*$\rp     ~\SOR~     \x  {\kw{:=}}     \newKW\, \c\,\lp \x$^*$\rp   ~\SOR~
   \returnKW \,  \x   \\
  \syntax{GhostDecl} &\BBC&  \kwN{ghost} \f\lp \ \x$^*$\ \rp \lb \  \SE\ \rb\\
 \SE  &\BBC&    \kwN{true}   ~\SOR~  \kwN{false}   ~\SOR~  \kwN{null}  ~\SOR~  \x  \   ~\SOR~  
     \   \SE=\SE    ~\SOR~ \kwN{if}\, \SE\,   \kwN{then}\,  \SE\,    \kwN{else}\, \SE    ~\SOR~  \SE.\f\lp\ \SE$^*$ \ \rp\\
 \x, \f, \m &\BBC&  \prg{Identifier} 
 \end{tabular}

  \vspace{.03in}
  \noindent
 where we use metavariables as follows:
 $\x \in  \syntax{VarId} \ \ \  \f \in  \syntax{FldId} \ \ \  \m \in  \syntax{MethId} \ \ \  \c \in  \syntax{ClassId}$, and  \x\ includes \this
\end{definition}

We define a method lookup function, $\mathcal{M}$ which returns the corresponding method definition given a class \c\ and a method identifier \m, and similarly a ghostfield lookup function, $\mathcal{G}$ 

 \begin{definition}[Lookup] For a class identifier \prg{C}  and a method identifier \prg{m}$:$  $ ~ $ \\
\label{def:lookup}
\noindent
$
\Meths {} {\prg{C}} {m}       \triangleq  \ \left\{
\begin{array}{l}
                        \m\, \lp \p_1, ... \p_n \rp \lb Stmts   \rb\\
\hspace{0.5in} \mbox{if}\  \M(\prg{C}) =   \kwN{class}\, \prg{C}\, \  \lb ...   \kwN{method}\, \m\, \lp \p_1, ... \p_n \rp \lb Stmts  \rb  ... \ \rb.
\\
\mbox{undefined},  \ \ \ \mbox{otherwise.}
\end{array}
                    \right.$
\\
$
{\mathcal G} (\M, {\prg{C}}, {\f})    \ \   \triangleq  \ \left\{
\begin{array}{l}
                        \f\, \lp \p_1, ... \p_n \rp \lb \prg{e}  \rb\\
\hspace{0.5in} \mbox{if}\  \M(\prg{C}) =   \kwN{class}\, \prg{C}\, \  \lb ...   \kwN{ghost}\,  \m\, \lp \p_1, ... \p_n \rp \lb \prg{e}  \rb  ... \ \rb.
\\
\mbox{undefined},  \ \ \ \mbox{otherwise.}
\end{array}
                    \right.$
  \end{definition}

\subsection{The Operational Semantics of \LangOO}
\label{formal:semantics}

We will now define execution of \LangOO code.
We start by  defining the  runtime entities, and runtime configurations, $\sigma$, which consist of heaps and stacks of frames.
 The frames are pairs consisting of a continuation, and a mapping from identifiers to values.
The continuation represents the code to be executed next, and the mapping gives meaning
to the formal and local parameters.

\begin{definition}[Runtime Entities]
We define  addresses, values, frames, stacks, heaps and runtime configurations.

\begin{itemize}
\item
We take addresses to be an  enumerable set,  \prg{Addr}, and use the identifier $\alpha\in \prg{Addr}$ to indicate an address.
\item
Values, $v$, are either addresses, or sets of addresses or null:\\
 $~ ~ ~ \ v \in \{ \prg{null} \} \cup \prg{Addr}\cup {\mathcal P}( \prg{Addr})$.
\item
  Continuations are either   statements  (as defined in Definition~\ref{def:syntax:classes}) or a marker, \x {\kw{:=}} $\bullet$, for a nested call followed by
  statements to be executed
  once the call returns.

\begin{tabular}{lcll}
\syntax{Continuation} &\BBC&   \syntax{Stmts} ~\SOR~   \x {\kw{:=}} $\bullet$ \semi\ \syntax{Stmts} \\
 \end{tabular}

\item
Frames, $\phi$, consist of a code stub  and a  mapping from identifiers to values:\\  $~ ~ ~ \ \phi \ \in\ \syntax{CodeStub} \times \prg{Ident} \rightarrow Value$,
\item
Stacks,  $\psi$, are sequences of frames, $\psi\ ::=   \phi \ | \ \phi\cdot\psi$.
\item
Objects consist of a class identifier, and a partial mapping from field identifier to values: \\  \ $~ ~ ~ \ Object\ = \ \prg{ClassID} \times (\prg{FieldId} \rightarrow Value)$.
\item
Heaps, $\chi$, are mappings from addresses to objects:\  \  $\chi\ \in\ \prg{Addr} \rightarrow Object$.
\item
Runtime configurations, $\sigma$, are pairs of stacks and heaps, $\sigma\ ::=\ (\ \psi, \chi\ )$.
\end{itemize}

\end{definition}

Note that values may be sets of addresses. Such values are never part of the execution of 
\LangOO, but are used to give semantics to assertions . 
Next, we define the interpretation of variables (\x) and   field look up  (\x.\f)
in the context of frames,
heaps and runtime configurations; these interpretations are used to define the operational semantics and  also  the
validity of assertions, later on in Definitions 3-7. 

\begin{definition}[Interpretations]
\label{def:interp}
We first define lookup of fields and classes, where $\alpha$ is an address, and \f\, is a field identifier:
\begin{itemize}
\item
$\chi ({\alpha},{\f})$ $\triangleq$  $\fldMap({\alpha},{\f})$\ \ \ if \ \ $\chi(\alpha)=(\_, \fldMap)$.
\item
$\ClassOf {\alpha} {\chi} $ $\triangleq$ $\c$\  \ \ if \ \ $\chi(\alpha)=(\c,\_)$
\end{itemize}

\noindent
We now define interpretations  as follows:

\begin{itemize}
\item
$\interp {\x}{\phi} $ $\triangleq$ $\phi(\x)$
\item
$\interp {\x.\f}{(\phi,\chi)} $ $\triangleq$ $v$, \ \ \ if \ \ $\chi(\phi(\x))=(\_, \fldMap)$ and $\fldMap(\f)$=$v$

\end{itemize}

\noindent
For ease of notation, we also use the shorthands below:
\begin{itemize}
\item
$\interp {\x}{(\phi\cdot\psi,\chi)} $ $\triangleq$ $\interp {\x}{\phi} $
\item
$\interp {\x.\f}{(\phi\cdot\psi,\chi)} $ $\triangleq$ $\interp  {\x.\f}{(\phi,\chi)} $
\item
$\ClassOf {\alpha} {(\psi,\chi)} $ $\triangleq$ $\ClassOf {\alpha} {\chi} $
\item
$\ClassOf {\x} {\sigma} $ $\triangleq$ $\ClassOf {\interp {\x}{\sigma}} {\sigma} $
\end{itemize}

\end{definition}

In the definition of the operational semantics of \LangOO we use the following notations for lookup and updates of runtime entities :

\begin{definition}[Lookup and update of runtime configurations]
We define convenient shorthands for looking up in  runtime entities.
\begin{itemize}
\item
Assuming that $\phi$ is the tuple  $(\prg{stub}, varMap)$, we use the notation  $\phi.\prg{contn}$ to obtain \prg{stub}.
\item
Assuming a value v, and that $\phi$ is the tuple  $(\prg{stub}, varMap)$, we define $\phi[\prg{contn}\mapsto\prg{stub'}]$ for updating the stub, \ie
$(\prg{stub'}, varMap)$.   We use  $\phi[\x \mapsto v]$  for updating the variable map, \ie  $(\prg{stub}, varMap[\x \mapsto v])$.
\item
Assuming a heap $\chi$, a value $v$, and   that $\chi(\alpha)=(\c, fieldMap)$,
we use $\chi[\alpha,\f \mapsto v]$ as a shorthand for updating the object, \ie $\chi[\alpha \mapsto (\c, fieldMap[\f \mapsto v]]$.
\end{itemize}

\end{definition}

\begin{figure*}
$\begin{array}{l}
\inferenceruleNN {methCall\_OS} {
\\
\phi.\prg{contn}\ =\ \x {\kw{:= }} \x_0.\m \lp \x_1, ... \x_n \rp \semi \prg{Stmts}
\\
\interp{\x_0}{\phi} = \alpha
\\
\Meths {} {\ClassOf {\alpha} {\chi}} {\m} \  =  \ \m\lp \p_1, \ldots \p_n \rp \lb \prg{Stmts}_1   \rb
  \\
 \phi''\ =\  (\  \prg{Stmts}_1,\ \ (\ \this \mapsto \alpha,
  \p_1 \mapsto  \interp{\x_1}{\phi}, \ldots \p_n \mapsto  \interp{\x_n}{\phi}\ ) \ )
}
{
 \M,\, (\ \phi\cdot\psi,\ \chi\ )\ \ \leadsto\  \ (\ \phi''\cdot\phi[\prg{contn}\mapsto\x  \kw{:=} \bullet \semi \prg{Stmts}] \cdot\psi,\ \chi\ )
}

\\ \\
\inferenceruleNN {varAssgn\_OS} {
 \phi.\prg{contn} \ = \ \x  {\kw{:= }}  \y.\f \ \semi \prg{Stmts}\ \hspace{2cm} \ClassOf {\y} {\sigma} =\ClassOf {\this} {\sigma}
}
{
 \M,\,  (\ \phi\cdot\psi, \chi\ )\ \ \leadsto\  \ (\ \phi[ \prg{contn} \mapsto \prg{Stmts}, \x\mapsto \interp{\y.\f}{\phi,\chi}] \cdot\psi,\ \chi\  )
}
\\
\\
\inferenceruleNN{fieldAssgn\_OS} {
 \phi.\prg{contn}\ =\  \x.\f  \kw{:=} \y  \semi \prg{Stmts} \hspace{2cm} \ClassOf {\x} {\sigma} =\ClassOf {\this} {\sigma}
}
{
 \M,\,  (\ \phi\cdot\psi, \chi\  )\ \ \leadsto\  \ (\ \phi[\prg{contn}\mapsto  \prg{Stmts} ] \cdot\psi, \chi[\interp{\x}{\phi},\f \mapsto \interp{\y}{\phi,\chi}]\  )
}
\\
\\
\inferenceruleNN {objCreate\_OS} {
 \phi.\prg{contn}\ =\  \x  \kw{:=} \kwN{new }\, \c \lp \x_1, ... \x_n \rp  \semi \prg{Stmts}
 \\
 \alpha\ \mbox{new in}\ \chi
 \\
\f_1, .. \f_n\ \mbox{are the fields declared in } \M(\c)
}
{
 \M,\,  (\ \phi\cdot\psi, \chi\ )\ \ \leadsto\  \ (\ \phi[\prg{contn}\mapsto  \prg{Stmts},\x \mapsto \alpha\ ] \cdot\psi, \ \chi[\alpha \mapsto (\c,\f_1 \mapsto \interp{\x_1}{\phi},  ... \f_n \mapsto \interp{\x_n}{\phi}  ) ]\ )
}
\\
\\
\inferenceruleNN {return\_OS} {
 \phi.\prg{contn}\ =\   {\kwN{return }}\, \x  \semi \prg{Stmts}\ \  \ or\  \ \  \phi.\prg{contn}\ =\   {\kwN{return}}\, \x
 \\
\phi'.\prg{contn}\ =\  \x' \kw{:=} \bullet  \semi \prg{Stmts}'
}
{
 \M,\,  (\ \phi\cdot\phi'\cdot\psi, \chi\ )\ \ \leadsto\  \ (\ \phi'[\prg{contn}\mapsto  \prg{Stmts'},\x' \mapsto \interp{\x}{\phi}] \cdot\psi, \ \chi \ )
}
\end{array}
$
\caption{Operational Semantics}
\label{fig:Execution}
\end{figure*}

Execution of a statement has the form $\M, \sigma \leadsto \sigma'$, and is defined in figure \ref{fig:Execution}.

\begin{definition}[Execution] of one or more steps is defined as follows:

\begin{itemize}
     \item
   The relation $\M, \sigma \leadsto \sigma'$, it is defined in Figure~\ref{fig:Execution}.

   \item
   $\M, \sigma \leadsto^* \sigma'$ holds, if i) $\sigma$=$\sigma'$, or ii) there exists a $\sigma''$ such that
   $\M, \sigma \leadsto^* \sigma''$ and $\M, \sigma'' \leadsto \sigma'$.
 \end{itemize}

\end{definition}

\subsection{Module linking}

When studying validity of assertions in the open world we are concerned with whether   the  module
under consideration makes  a certain guarantee when executed in conjunction with other modules. To answer this, we
 need the concept of linking other modules to the module  under consideration.
 Linking, $\link$ ,  is an operation that takes two modules, and creates a module which corresponds  to the union of the two.
We place some conditions for module linking to be defined: We require that the two modules do not contain implementations for the same class identifiers,


\begin{definition}[Module Linking]
\label{def:link}
The linking operator\  \ $\link:\  \syntax{Module} \times  \syntax{Module} \longrightarrow \syntax{Module}$ is defined as follows:

$
\M \link \M{'}  \ \triangleq  \ \ \left\{
\begin{array}{l}
                        \M\ \link\!_{aux}\ \M{'},\ \ \   \hbox{if}\  \ dom(\M)\!\cap\!dom(\M')\!=\!\emptyset\\
\mbox{undefined}  \ \ \ \mbox{otherwise.}
\end{array}
                    \right.$

and where,
\begin{itemize}
     \item
   For all  $\prg{C}$: \ \
   $(\M\ \link\!_{aux}\ \M')(\prg{C})\  \triangleq  \ \M(\prg{C})$  if  $\prg{C}\in dom(\M)$, and  $\M'(\prg{C})$ otherwise.
 \end{itemize}
\end{definition}

Some properties of linking are described in lemma \ref{lemma:linking} in the main text. \sophia{
%
%
%
%
 For the proof, 
\sd{ (1) and (2) follow from Definition \ref{def:link}. (3) follows from \ref{def:link}, and the fact that if a lookup $\mathcal \M$ is
defined for $\M$, then it is also defined for $\M\link\M'$ and returns the same method, and similar result for class lookup.}
}

 \subsection{Module pairs and visible states semantics}

A module $\M$ adheres to an invariant assertion  $\A$, if it satisfies
$\A$ in all runtime configurations that  can be reached through execution of the code of $\M$ when linked to that
of {\em any other} module $\M'$, and
which are {\em external} to $\M$. We call external to $\M$ those
configurations which are currently executing code which does not come from $\M$. This allows the code in $\M$ to break
the invariant internally and temporarily, provided that the invariant is observed across the states visible to the external client $\M'$.

We have defined two module execution in the main paper, Def. \ref{def:execution:internal:external}.
%
%
%
%

In \sophia{that} definition 
$n$ is allowed to have the value $2$. In this case the final bullet is trivial and  there exists a direct, external transition from $\sigma$ to $\sigma'$.  Our definition is related to the concept of visible states semantics, but differs in that visible states semantics select the configurations at which an invariant is expected to hold, while we select the states which are considered for executions which are expected to satisfy an invariant. Our assertions can talk about several states (through the use of the $\Future {\_}$ and $\Past{\_}$ connectives), and thus, the intention of ignoring some intermediate configurations can only be achieved if we refine the concept of execution. 

%
%
%
%

\sophia{We have defined initial and arising configurations  in Definition \ref{def:arise}. Note that there are infinitely many different initial configurations, they will be differing in the code stored in the continuation of the unique frame.}

%
%
%
%

\clearpage

\section{Foundations: Specification Language}
\label{app:assertions}
We now define the syntax and semantics of expressions and holistic assertions.
\sd{The novel, holistic, features of \Chainmail (permission, control, time, space, and viewpoint),
as well as our wish to support some form of recursion while keeping the logic of assertions classical,  introduced 
challenges, which we discuss in this section.}

 \subsection{Syntax of Assertions}

\begin{definition}[Assertions]  \sd{Assertions consist of (pure) expressions \e, classical assertions about the contents of heap/stack, the usual logical  connectives, as well as our holistic concepts.}
\label{def:assertions}

 $\begin{array}{lcl}
  ~  \\
 \SE  &\BBC&    \kwN{true}   ~\SOR~  \kwN{false}   ~\SOR~  \kwN{null}  ~\SOR~  \x  \   ~\SOR~  
     \   \SE=\SE    ~\SOR~ \kwN{if}\, \SE\,   \kwN{then}\,  \SE\,    \kwN{else}\, \SE    ~\SOR~  \SE.\f\lp\ \SE^* \ \rp\\
     \\
 \A &\ \BBC   &   \SE \   \mid \  \SE=\SE  \mid \   \SE:\prg{ClassId}  \ \mid \
    \SE\in\prg{S}   \mid  \  \\
    &
  &  \A \rightarrow \A  \ \mid\  \     \A \wedge \A  \ \mid\  \  \A \vee \A  \ \mid\  \ \neg A   \ \mid \ \\
  & &  \forall \x.\A  \ \mid \  \forall \prg{S}:SET.\A  \ \mid  \  \exists \x.\A  \ \mid \  \exists \prg{S}:SET.\A  \  \ \mid\   \\
 &    &  \CanAccess x y 
           \ \mid\  \Calls {\prg{x}}  {\prg{m}} {\prg{x}}  {\prg{x}^*}\\          
&    &  \Next \A  \ \mid \   \Future \A \ \mid \  \Prev \A   \ \mid \ \Past \A \ \mid \\  
 &    &        \Using \SF  \A  \ \mid \  \External \x     \\
 \\
 \x, \f, \m &\BBC&  \prg{Identifier}  ~ \\
\end{array}$
\end{definition}
 
 \sd{Expressions support calls with parameters  ($\e.\f(\e^*)$); these are calls to ghostfield
functions. This  supports recursion at the level of expressions; therefore, the value of  an expression  may be
undefined (either because of infinite recursion, or because the expression accessed undefined fields or variables). 
Assertions of the form   $\e$=$\e'$ are satisfied only if both $\e$ and $\e'$ are defined. Because we do not support 
recursion at the level of assertions, assertions from a classical logic (\eg $\A \vee \neg\A$ is a tautology). }
 
We will discuss evaluation of expressions in section \ref{sect:expressions}, standard assertions about heap/stack and logical
 connectives in \ref{sect:standard}. 
 \sophia{We have discussed  the treatment of  permission, control, space, and viewpoint  in 
the main text in the  Definitions 3-7  in section \ref{sect:pcsv} 
the treatment of time in  Definitions 8,9 in the main text, section \ref{sect:time},
We will discuss properties of assertions in Lemmas \ref{lemma:classic}-\ref{lemma:classic:two}.}
 \sd{The judgement $\M\mkpair \M', \sigma  \models \A$ expresses that $\A$ holds in  $\M\mkpair \M'$ and $\sigma$, and 
while $\M\mkpair \M', \sigma  \not\models \A$  expresses that $\A$ does not hold  in  $\M\mkpair \M'$ and $\sigma$.}

\subsection{Values of Expressions}
\label{sect:expressions}

The value  of  an expression  is described through judgment $ \M,\, \sigma, \SE \ \hookrightarrow\  v$,
defined in  Figure \ref{fig:ValueSimpleExpressions}.
We use the configuration, $\sigma$, to read the contents of the top stack frame
(rule ${\sf {Var\_Val}}$) or the contents of the heap (rule
${\sf {Field\_Heap\_Val}}$). We use the module, \M, to find the  ghost field declaration corresponding to the
ghost field being used.

The treatment of fields and ghost fields is described in rules ${\sf {Field\_Heap\_Val}}$,\\  ${\sf {Field\_Ghost\_Val}}$ and 
${\sf {Field\_Ghost\_Val2}}$.  If the field \f~ exists in the heap, then its value is returned (${\sf {Field\_Heap\_Val}}$). 
Ghost field reads, on the other hand, have the form $\e_0.\f(\e_1,...\e_n)$, and their value is
described in rule ${\sf {Field\_Ghost\_Val}}$:
The lookup function $\mathcal G$  (defined in the obvious way in the Appendix, Def.\ref{def:lookup})
returns the expression constituting the body for that ghost field, as defined in the class of $\e_0$.
We return  that expression
evaluated in a configuration where the formal parameters have been substituted by the values of the actual
parameters.

Ghost fields support recursive definitions. For example, imagine a module $\M_0$ with
a class \prg{Node} which has a field called \prg{next}, and which 
had a ghost field \prg{last}, which finds  the last \prg{Node} in a sequence
and is defined recursively as \\
$~ \strut \hspace{.1cm}$ \ \ \ \prg{if}\ \ \prg{this.next}=\prg{null}\  \prg{then} \ \prg{this} \ \prg{else} \ \prg{this.next.last},\\
and another ghost field \prg{acyclic}, which expresses that a sequence is acyclic,
defined recursively as \\
$~ \strut \hspace{.1cm}$ \ \ \ \prg{if}\ \ \prg{this.next}=\prg{null}\  \prg{then} \ \prg{true} \ \prg{else} \ \prg{this.next.acyclic}.\\

The relation $ \hookrightarrow$ is partial. 
For example, assume   a configuration
$\sigma_0$ where
\prg{acyc} points to a \prg{Node} whose field \prg{next} has value \prg{null}, and   
\prg{cyc} points to a \prg{Node} whose field \prg{next} has the same value as \prg{cyc}. Then,   
$\M_0,\sigma_0,\,\prg{acyc.acyclic}  \ \hookrightarrow\  \prg{true}$, but we would have no value for 
$\M_0,\sigma_0,\, \prg{cyc.last}  \ \hookrightarrow\  ...$, nor for
$\M_0,\sigma_0,\, \prg{cyc.acyclic}  \ \hookrightarrow\  ...$.

Notice also that for an expression of the form  
\prg{\e.\f}, both ${\sf {Field\_Heap\_Val}}$ and ${\sf {Field\_Ghost\_Val2}}$ could be applicable: rule ${\sf {Field\_Heap\_Val}}$
will be applied if \prg{f} is a field of the object at \prg{e}, while rule ${\sf {Field\_Ghost\_Val}}$
will be applied if \prg{f} is a ghost field of the object at \prg{e}. We expect the set of fields and ghost fields in a 
given class to be disjoint.
This allows a specification to be agnostic over whether a field is a physical field or just ghost information.
For example, assertions (1) and (2) from  section  \ref{sect:motivate:Bank}
 talk about the \prg{balance} of an \prg{Account}. 
In module $\M_{BA1}$ (Appendix~\ref{Bank:appendix}), where we keep the balances in the account objects, this is a physical field. 
In $\M_{BA2}$ (also in Appendix~\ref{Bank:appendix}), where we keep the
balances in a ledger, this is ghost information.

\begin{figure*}
{$\begin{array}{l}
\begin{array}{llll}
\inferenceruleN {True\_Val} {
}
{
 \M,\, \sigma, \kwN{true} \ \hookrightarrow\  \kwN{true}
}
& 
\inferenceruleN {False\_Val} {
}
{
 \M,\, \sigma, \kwN{false} \ \hookrightarrow\  \kwN{false}
}
&
\inferenceruleN  {Null\_Val} {
}
{
 \M,\, \sigma, \kwN{null} \ \hookrightarrow\  \kwN{null}
}
&
\inferenceruleN {Var\_Val} {
}
{
 \M,\,  \sigma, \x \ \hookrightarrow\   \sigma({\x})
}
\end{array}
\\ \\
\begin{array}{lll}
\begin{array}{l}
\inferenceruleNM{Field\_Heap\_Val} {
 \M,\,  \sigma, \SE \ \hookrightarrow\   \alpha \hspace{1.5cm} 
 \sigma(\alpha,\f)=v
}
{
 \M,\, \sigma, \SE.\f  \ \hookrightarrow\   v
}
\\
\\
\inferenceruleNM{Field\_Ghost\_Val2} {
 \M,\, \sigma, \SE.\f \lp \rp \ \hookrightarrow\   v
}
{
 \M,\, \sigma, \SE.\f   \ \hookrightarrow\   v
}
\end{array}
& &
\inferenceruleNM{Field\_Ghost\_Val}
{
~ \\
 \M,\, \sigma, \SE_0   \ \hookrightarrow\  \alpha
\\
 \M,\, \sigma, \SE_i  \ \hookrightarrow\   v_i\ \ \ \ i\in\{1..n\}
 \\
{\mathcal{G}}
(\M, {\ClassOf {\alpha} {\sigma}}, {\f}) \  =  
\ \f\lp \p_1, \ldots \p_n \rp \lb\ \SE \ \rb
  \\
  \M,\,\sigma[\p_1\mapsto v_1, .... \p_n\mapsto v_n], \SE    \hookrightarrow_{\SAF}\   v 
 }
{
 \M,\,  \sigma, \ \SE_0.\f \lp \SE_1,....\SE_n\rp \hookrightarrow   \ v
}
\\ \\
\inferenceruleNM{If\_True\_Val} 
{
 \M,\,  \sigma, \SE \ \hookrightarrow\   \prg{true}  \\
   \M,\,  \sigma, \SE_1 \ \hookrightarrow\   v  
}
{
 \M,\, \sigma, \kwN{if}\ \SE\  \kwN{then} \ \SE_1 \ \kwN{else} \ \SE_2\  \hookrightarrow  \ v
}
& &
\inferenceruleNM {If\_False\_Val} 
{
 \M,\,  \sigma, \SE \ \hookrightarrow\   \prg{false}  \\
   \M,\,  \sigma, \SE_2 \ \hookrightarrow\   v  }
{
 \M,\, \sigma, \kwN{if}\ \SE\  \kwN{then} \ \SE_1 \ \kwN{else} \ \SE_2\  \hookrightarrow\  v
}
\\ \\ 
\inferenceruleNM {Equals\_True\_Val} 
{
 \M,\,  \sigma, \SE_1 \ \hookrightarrow\    v \\
   \M,\,  \sigma, \SE_2 \ \hookrightarrow\     v 
}
{
 \M,\, \sigma, \SE_1 =  \SE_2 \hookrightarrow \prg{true}
}
& &
\inferenceruleNM {Equals\_False\_Val} 
{
 \M,\,  \sigma, \SE_1 \ \hookrightarrow\    v \\
   \M,\,  \sigma, \SE_2 \ \hookrightarrow\     v' \hspace{2cm}  v\neq v'
}
{
 \M,\, \sigma, \SE_1 =  \SE_2 \hookrightarrow \ \prg{false}
}
\end{array}
\end{array}
$}
\caption{Value of  Expressions}
\label{fig:ValueSimpleExpressions}
\end{figure*}

\subsection{Satisfaction of Assertions - standard}
\label{sect:standard}
\sd{
We now define the semantics of assertions involving expressions, the heap/stack, and logical connectives.
The semantics are unsurprising, except, perhaps the relation between validity of assertions and the values of
expressions.
}

 \begin{definition}[Interpretations for simple expressions]

For a runtime configuration, $\sigma$,    variables $\x$ or \SF, we define its interpretation as follows:

\begin{itemize}
  \item
  $\interp {\x}{\sigma}$ $ \triangleq$ $\phi(\x)$  \ \ if \ \ $\sigma$=$(\phi\cdot\_,\_)$
  \item
  $\interp {\SF}{\sigma}$ $ \triangleq$ $\phi(\SF)$  \ \ if \ \ $\sigma$=$(\phi\cdot\_,\_)$
  \item
    $\interp {\x.\f}{\sigma}$ $ \triangleq$ $\chi(\interp {\x}{\sigma},\f)$  \ \ if \ \ $\sigma$=$(\_,\chi)$
   \end{itemize}
\end{definition}

\begin{definition}[ Basic Assertions] For modules $\M$, $\M'$,  configuration $\sigma$,  we define$:$
\label{def:valid:assertion:basic}
\begin{itemize}
\item
$\M\mkpair \M', \sigma \models\SE$ \IFF   $ \M,\,  \sigma, \SE \ \hookrightarrow\   \prg{true}$ 
\item
$\M\mkpair \M', \sigma \models\SE=\SEPrime$ \IFF there exists a value $v$ such that  $\M,\,  \sigma, \SE \ \hookrightarrow\   v$  and $ \M,\,  \sigma, \SEPrime \ \hookrightarrow\   v$.
           \item
$\M\mkpair \M', \sigma \models\SE:\prg{ClassId}$ \IFF there exists an address $\alpha$ such that \\
$\strut ~ $ \hspace{2in} \hfill   
 $ \M,\,  \sigma, \SE \ \hookrightarrow\   \alpha$, and $\ClassOf{\alpha}{\sigma}$ = \prg{ClassId}.
\item
$\M\mkpair \M', \sigma \models \SE\in \prg{S}$ \IFF there exists a value $v$ such that 
 $ \M,\,  \sigma, \SE \ \hookrightarrow\   v$, and $v \in \interp{\prg{S}}{\sigma}$.
\end{itemize}
\end{definition}

Satisfaction of assertions which contain expressions is predicated on termination of these expressions.
Continuing our earlier example,  
$\M_0\mkpair \M', \sigma_0 \models \prg{acyc.acyclic}$ holds for any $\M'$, while $\M_0\mkpair \M', \sigma_0 \models \prg{cyc.acyclic}$
does not hold, and $\M_0\mkpair \M', \sigma_0 \models \prg{cyc.acyclic}=\prg{false}$ does not hold either.
In general, when $\M\mkpair \M', \sigma  \models \prg{e}$ holds,  then $\M\mkpair \M', \sigma  \models \prg{e}=\prg{true}$ holds too.
But when $\M\mkpair \M', \sigma  \models \prg{e}$ does not hold, this does \emph{not} imply that $\M\mkpair \M', \sigma  \models \prg{e}=\prg{false}$ holds.
Finally, an assertion of the form $\e_0=\e_0$ does not always hold; for example,   $\M_0\mkpair \M', \sigma_0 \models \prg{cyc.last}=\prg{cyc.last}$ does not hold.

We now define satisfaction of assertions which involve logical connectives and existential or universal quantifiers, in the standard way:

\begin{definition}[Assertions with logical connectives and quantifiers]  
\label{def:valid:assertion:logical}
For modules $\M$, $\M'$, assertions $\A$, $\A'$, variables \prg{x}, \prg{y}, \prg{S},  and configuration $\sigma$, we define$:$
\begin{itemize}
\item
$\M\mkpair \M', \sigma \models \forall \prg{S}:\prg{SET}.\A$ \IFF  $\M\mkpair \M', \sigma[\prg{Q}\mapsto R] \models  \A[\prg{S}/\prg{Q}]$ \\
$\strut ~ $ \hfill for all sets of addresses $R\subseteq dom(\sigma)$, and  all \prg{Q} free in $\sigma$ and $\A$.
\item
$\M\mkpair \M', \sigma \models \exists \prg{S}:\prg{SET}\!.\,\A$ \IFF  $\M\mkpair \M', \sigma[\prg{Q}\mapsto R] \models  \A[\prg{S}/\prg{Q}]$ \\
 $\strut ~ $ \hfill  for some set of addresses $R\subseteq dom(\sigma)$, and   \prg{Q} free in $\sigma$ and $\A$.
\item
$\M\mkpair \M', \sigma \models \forall \prg{x}.\A$ \IFF
$\sigma[\prg{z}\mapsto \alpha] \models  \A[\prg{x}/\prg{z}]$ \ for all  $\alpha\in dom(\sigma)$, and  some \prg{z} free in $\sigma$ and $\A$.
\item
$\M\mkpair \M', \sigma \models \exists \prg{x}.\A$ \IFF
$\M\mkpair \M', \sigma[\prg{z}\mapsto \alpha] \models  \A[\prg{x}/\prg{z}]$\\
$\strut ~ $ \hfill for some  $\alpha\in dom(\sigma)$, and   \prg{z} free in $\sigma$ and $\A$.
\item
$\M\mkpair \M', \sigma \models \A \rightarrow \A' $ \IFF  $\M\mkpair \M', \sigma \models \A $ implies $\M\mkpair \M', \sigma \models \A' $
\item
$\M\mkpair \M', \sigma \models  \A \wedge \A'$   \IFF  $\M\mkpair \M', \sigma \models  \A $
and $\M\mkpair \M', \sigma \models  \A'$.
\item
$\M\mkpair \M', \sigma \models  \A \vee \A'$   \IFF  $\M\mkpair \M', \sigma \models  \A $
or $\M\mkpair \M', \sigma \models  \A'$.
\item
$\M\mkpair \M', \sigma \models  \neg\A$   \IFF  $\M\mkpair \M', \sigma \models  \A $
does not hold.
\end{itemize}
\end{definition}

Satisfaction is not preserved with growing configurations; for example, the assertion $\forall \x. [\ \x : \prg{Account} \rightarrow \x.\prg{balance}>100\ ]$ 
may hold in a smaller configuration, but not hold in an extended configuration. 
Nor is it preserved with configurations getting smaller; consider \eg $\exists \x. [\ \x : \prg{Account} \wedge \x.\prg{balance}>100\ ]$.

\noindent
Again, with our earlier example,  
$\M_0\mkpair \M', \sigma_0 \models \neg (\prg{cyc.acyclic}=\prg{true})$    and  
$\M_0\mkpair \M', \sigma_0 \models  \neg (\prg{cyc.acyclic}=\prg{false})$, 
and also 
$\M_0\mkpair \M', \sigma_0 \models  \neg (\prg{cyc.last}=\prg{cyc.last})$
hold.

\label{sect:pl} 
We define equivalence of  assertions in the usual sense: two assertions are equivalent if they are satisfied  in
the context of the same configurations.
Similarly, an assertion entails another assertion, iff all configurations 
which satisfy the former also satisfy the latter.  

\begin{definition}[Equivalence and entailments of assertions]
$ ~ $

\begin{itemize}
\item
$\A \subseteqq \A'\  \IFF\    \forall \sigma.\, \forall \M, \M'. \ [\ \ \M\mkpair \M', \sigma \models \A\ \mbox{ implies }\ \M\mkpair \M', \sigma \models \A'\ \ ].$
\item
$\A \equiv \A'\  \IFF\     \A \subseteqq \A' \mbox{ and }  \A' \subseteqq \A.$
\end{itemize}
\end{definition}

\begin{lemma}[Assertions are classical-1]
\label{lemma:classic}
For all runtime configurations $\sigma$,    assertions $\A$ and $\A'$, and modules $\M$  and $\M'$, we have
\begin{enumerate}
\item
$\M\mkpair \M', \sigma \models \A$\ or\ $\M\mkpair \M', \sigma \models \neg\A$
\item
$\M\mkpair \M', \sigma  \models \A \wedge \A'$ \SP if and only if \SP $\M\mkpair \M', \sigma \models \A$ and $\M\mkpair \M', \sigma  \models \A'$
\item
$\M\mkpair \M', \sigma  \models \A \vee \A'$ \SP if and only if \SP $\M\mkpair \M', \sigma  \models \A$ or  $\sigma \models \A'$
\item
$\M\mkpair \M', \sigma  \models \A \wedge \neg\A$ never holds.
\item
$\M\mkpair \M', \sigma  \models \A$ and  $\M\mkpair \M', \sigma  \models \A \rightarrow \A'$  implies
$\M\mkpair \M', \sigma  \models \A '$.
\end{enumerate}
\end{lemma}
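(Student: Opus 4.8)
The plan is to observe that Definition~\ref{def:valid:assertion:logical} defines satisfaction of the logical connectives $\wedge$, $\vee$, $\rightarrow$ and $\neg$ directly in terms of the corresponding meta-level connectives, so that parts (2), (3) and (5) are immediate unfoldings of that definition, while parts (1) and (4) need one extra appeal to classical reasoning at the meta level. No induction is required: each of the five claims concerns only the outermost connective of the assertion in question.

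Concretely I would proceed as follows. For (2): by Definition~\ref{def:valid:assertion:logical}, $\M\mkpair\M',\sigma \models \A\wedge\A'$ holds exactly when $\M\mkpair\M',\sigma\models\A$ and $\M\mkpair\M',\sigma\models\A'$ both hold, which is the claim; (3) and (5) are identical in structure, reading off the clauses for $\vee$ and $\rightarrow$ respectively. For (1): again by Definition~\ref{def:valid:assertion:logical}, $\M\mkpair\M',\sigma\models\neg\A$ holds precisely when $\M\mkpair\M',\sigma\models\A$ does not hold; hence the disjunction ``$\M\mkpair\M',\sigma\models\A$ or $\M\mkpair\M',\sigma\models\neg\A$'' is the instance ``$P$ or not $P$'' of the meta-level law of excluded middle, with $P$ the proposition $\M\mkpair\M',\sigma\models\A$. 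For (4): unfolding, $\M\mkpair\M',\sigma\models\A\wedge\neg\A$ would yield both $\M\mkpair\M',\sigma\models\A$ and its negation, a contradiction, so it never holds.

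The only point worth stating explicitly --- and the one place where there is anything at all to check --- is that (1), and the use of $\neg$ as genuine negation in (4), rely on the satisfaction relation $\M\mkpair\M',\sigma\models\A$ being \emph{total}: for every assertion $\A$ and configuration $\sigma$ the statement ``$\M\mkpair\M',\sigma\models\A$'' is a well-defined proposition, to which excluded middle applies. This in turn rests on the fact, noted in Section~\ref{sect:assertions}, that \Chainmail\ permits recursion only at the level of expressions and not of assertions: satisfaction is defined by structural (hence well-founded) recursion on $\A$, and a possibly-diverging expression \e\ occurring inside an assertion merely makes the basic assertion \e\ (and \e$=$\e$'$) fail, rather than leaving satisfaction undefined (Definition~\ref{def:valid:assertion:basic}). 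I therefore expect the ``main obstacle'' to be purely expository --- articulating cleanly why satisfaction is total --- rather than mathematical; granting totality, every clause follows by a single unfolding of Definition~\ref{def:valid:assertion:logical} together with meta-level classical logic.
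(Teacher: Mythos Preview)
Your proposal is correct. For parts (2)--(5) you and the paper proceed identically, reading off Definition~\ref{def:valid:assertion:logical}.

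For part~(1) there is a genuine, if minor, difference worth noting. The paper's own proof first establishes the auxiliary claim~(*) ``either $\M\mkpair\M',\sigma\models\A$ or $\M\mkpair\M',\sigma\not\models\A$'' by an explicit structural induction on~$\A$, treating the basic assertions (Definition~\ref{def:valid:assertion:basic}) as base cases and stepping through the holistic connectives (Definitions~\ref{def:valid:assertion:access}--\ref{def:valid:assertion:time}) as well as the logical ones. You instead observe that, since $\models\neg\A$ is \emph{defined} as the meta-level negation of $\models\A$, claim~(*) is nothing more than an instance of excluded middle once one knows that $\models\A$ is a well-defined proposition; and you locate that well-definedness in the absence of assertion-level recursion. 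This is a cleaner argument and buys you exactly what the paper's induction buys---the induction is effectively re-proving LEM case by case. Your route is the one the Coq development actually takes (it assumes LEM globally, as the paper notes in Section~\ref{sect:discussion}); the paper's written induction is more explicit but not more informative.
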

\begin{proof}  The proof of part (1) requires to first prove that for all \emph{basic assertions} \A, \\
\strut \hspace{1.1cm} (*) \ \ \ either $\M\mkpair \M', \sigma  \models \A$
or $\M\mkpair \M', \sigma  \not\models \A$.\\
We prove this using Definition \ref{def:valid:assertion:basic}.
 Then, we prove (*) for all
possible assertions, by induction of the structure of \A, and the Definitions 
 \ref{def:valid:assertion:logical},
 and also Definitions
  \ref{def:valid:assertion:access}, \ref{def:valid:assertion:control}, \ref{def:valid:assertion:view},  
 \ref{def:valid:assertion:space}, and \ref{def:valid:assertion:time}.
Using the definition of $\M\mkpair \M', \sigma \models \neg\A$ from Definition  \ref{def:valid:assertion:logical} we conclude the proof of (1).

For parts  (2)-(5) the proof goes by application of the corresponding definitions from \ref{def:valid:assertion:logical}.
Compare also with appendix \ref{sect:coq}.
 
  \end{proof}.
 
 \begin{lemma}[Assertions are classical-2]
 \label{lemma:classic:two}
For     assertions $\A$, $\A'$, and $\A''$ the following equivalences hold
\label{lemma:basic_assertions_classical}
\begin{enumerate}
\item
$ \A \wedge\neg \A \ \equiv \  \prg{false}$
\item
$ \A \vee \neg\A   \ \equiv \  \prg{true}$
\item
$ \A \wedge \A'  \ \equiv \  \A' \wedge \A$
\item
$ \A \vee \A'  \ \equiv \  \A' \vee \A$
\item
$(\A \vee \A') \vee \A'' \ \equiv \  \A \vee (\A' \vee\A'')$
\item
$(\A \vee \A') \wedge \A'' \ \equiv \  (\A \wedge \A')\, \vee\, (\A \wedge \A'')$
\item
$(\A \wedge \A') \vee \A'' \ \equiv \  (\A \vee \A')\, \wedge\, (\A \vee \A'')$
\item
$\neg (\A \wedge \A') \  \ \equiv \  \neg  \A   \vee\, \neg \A''$
\item
$\neg (\A \vee \A') \  \ \equiv \  \neg  \A   \wedge\, \neg \A'$
\item
$\neg (\exists \prg{x}.\A )  \  \ \equiv \  \forall \prg{x}.(\neg  \A)$
\item
$\neg (\exists \prg{S}:\prg{SET}.\A )  \  \ \equiv \  \forall \prg{S}:\prg{SET}.(\neg  \A)$
\item
$\neg (\forall \prg{x}. \A)  \  \ \equiv \  \  \exists \prg{x}.\neg(\A )$
\item
$\neg (\forall \prg{S}:\prg{SET}. \A)  \  \ \equiv \  \  \exists \prg{S}:\prg{SET}.\neg(\A )$
\end{enumerate}
\end{lemma}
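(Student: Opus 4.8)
The plan is to prove every one of the thirteen equivalences by the same recipe: unfold the definition of $\equiv$ into its two defining inclusions, fix an arbitrary runtime configuration $\sigma$ and modules $\M$, $\M'$, and show that the left- and right-hand assertions are satisfied by exactly the same $\M\mkpair\M',\sigma$. The single fact that makes this possible is Lemma~\ref{lemma:classic}(1): the relation $\M\mkpair\M',\sigma\models\A$ is classically two-valued, so every question about whether a given triple satisfies an assertion can be settled in the (classical) metatheory. Together with the clauses of Definition~\ref{def:valid:assertion:logical} — which translate each object-level connective into the matching metalevel connective — this reduces every case to a tautology of classical propositional or first-order logic.

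For the purely propositional laws (3)--(9) I would first rewrite $\M\mkpair\M',\sigma\models(-)$ using Definition~\ref{def:valid:assertion:logical}, so that $\wedge$ becomes metalevel ``and'', $\vee$ becomes ``or'', and $\neg$ becomes ``not'' (the last step uses Lemma~\ref{lemma:classic}(1) to know that $\models\neg\A$ is literally the negation of $\models\A$). After this uniform unfolding, (3) and (4) are commutativity of ``and''/``or'', (5) is associativity of ``or'', (6) and (7) are the two distributive laws, and (8) and (9) are De~Morgan --- each a propositional tautology, available because the metatheory has excluded middle. No structural induction on $\A,\A',\A''$ is needed beyond Definition~\ref{def:valid:assertion:logical} itself.

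For the two ``constant'' laws (1) and (2) I would step briefly into the semantics of expressions: $\prg{true}$ and $\prg{false}$ are pure expressions, and by the evaluation rules for literals in Figure~\ref{fig:ValueSimpleExpressions} (together with determinacy of $\hookrightarrow$ on literals) we have $\M,\sigma,\prg{true}\hookrightarrow\prg{true}$ always and $\M,\sigma,\prg{false}\hookrightarrow\prg{true}$ never; hence by Definition~\ref{def:valid:assertion:basic}, $\M\mkpair\M',\sigma\models\prg{true}$ for every $\sigma$ and $\M\mkpair\M',\sigma\not\models\prg{false}$ for every $\sigma$. Then (1) follows from Lemma~\ref{lemma:classic}(4): $\A\wedge\neg\A$ is satisfied by nothing, and so is $\prg{false}$. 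And (2) follows from Lemma~\ref{lemma:classic}(1) and (3): $\A\vee\neg\A$ is satisfied by everything, and so is $\prg{true}$.

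For the quantifier laws (10)--(13) I would unfold the $\exists$/$\forall$ clauses of Definition~\ref{def:valid:assertion:logical}. Taking (10) as the representative case: $\M\mkpair\M',\sigma\models\neg(\exists\prg{x}.\A)$ holds iff it is not the case that $\M\mkpair\M',\sigma[\prg{z}\mapsto\alpha]\models\A[\prg{x}/\prg{z}]$ for some $\alpha\in dom(\sigma)$, i.e. iff for every $\alpha\in dom(\sigma)$ we have $\M\mkpair\M',\sigma[\prg{z}\mapsto\alpha]\not\models\A[\prg{x}/\prg{z}]$, i.e. (using the $\neg$ clause together with the syntactic fact that substitution commutes with the connectives, $(\neg\A)[\prg{x}/\prg{z}]=\neg(\A[\prg{x}/\prg{z}])$) iff $\M\mkpair\M',\sigma[\prg{z}\mapsto\alpha]\models(\neg\A)[\prg{x}/\prg{z}]$ for every $\alpha\in dom(\sigma)$, which is exactly $\M\mkpair\M',\sigma\models\forall\prg{x}.\neg\A$. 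The same manipulation handles (11)--(13), replacing ``$\alpha\in dom(\sigma)$'' by ``$R\subseteq dom(\sigma)$'' for the set-quantifiers. I expect these quantifier cases to be the main obstacle: one must check that the same fresh witness variable $\prg{z}$ (free in $\sigma$ and in $\A$) and the same range of assignments are used on both sides, that $\sigma[\prg{z}\mapsto\alpha]$ is always well defined, and that substitution genuinely distributes over $\neg$. These are bookkeeping facts rather than deep ones, but they are exactly the points where a careless argument would go wrong; once they are in place, the metalevel equivalence ``$\neg\exists$ iff $\forall\neg$'' (itself needing only excluded middle, and only for the easy direction) finishes the proof.
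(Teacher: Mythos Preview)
Your proposal is correct and follows essentially the same approach as the paper: the paper's proof is a single sentence stating that all points follow by unfolding Definition~\ref{def:valid:assertion:logical} (with a pointer to the Coq mechanisation), and your proposal carries out exactly that unfolding in detail, grouping the cases sensibly and flagging the genuine bookkeeping obligations (fresh-variable discipline, substitution commuting with $\neg$) in the quantifier cases. Your treatment is considerably more explicit than the paper's, but the underlying argument is the same.
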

\begin{proof}
All points follow by application of the corresponding definitions from \ref{def:valid:assertion:logical}. 
Compare also with appendix \ref{sect:coq}.
 \end{proof}

%
%

\clearpage

\section{Exemplar: Bank Account}
\label{Bank:appendix}
\begin{figure}[thb]
\begin{lstlisting}
class Bank{

   method newAccount(amt){
        if (amt>=0) then{
            return new Account(this,amt)
   }   }
}

class Account{

    field balance
    field myBank
    
    method deposit(src,amt){
       if (amt>=0 && src.myBank=this.myBank && src.balance>=amt) then{
           this.balance = this.balance+amt
           src.balance = src.balance-amt
   }   }
   method makeAccount(amt){
       if (amt>=0 && this.balance>=amt) then{
           this.balance = this.balance - amt;
           return new Account(this.myBank,amt)
   }    }
}
\end{lstlisting}
 \vspace*{-7mm}
\caption{$M_{BA1}$: Implementation of \prg{Bank} and \prg{Account} -- version 1}
\label{fig:BanAccImplV1}
\end{figure}

In this section we revisit the \prg{Bank}/\prg{Account} example from
  \ref{sect:motivate:Bank}, 
 and show two different
 implementations, derived from Noble et al. \cite{arnd18} . 
 Both implementations  satisfy the three functional specifications and the holistic assertions
 (1), (2) and (3)  shown in section  \ref{sect:motivate:Bank}.
 The first version gives rise to runtime configurations as $\sigma_1$, 
 shown on the left side of Fig.   \ref{fig:BankAccountDiagrams}, 
 while the
 second version gives rise to runtime configurations as $\sigma_2$,
 shown on the right side of Fig.   \ref{fig:BankAccountDiagrams}. 
 in the main text.

 In this code, we use more syntax than the minimal syntax defined for \LangOO in Def. \ref{defONE}, as we use conditionals, and we allow nesting of expressions, e.g.\ a field read to be the receiver of a method call. Such extension can easily be encoded in the base syntax.

$\M_{BA1}$, the fist version is shown Fig. \ref{fig:BanAccImplV1}. It keeps all the information in the \prg{Account} object: namely,
the \prg{Account} contains the pointer to the bank, and the balance, while the \prg{Bank} is a pure capability, which contains
no state but is necessary for the creation of new \prg{Account}s.
In this version we have no ghost fields.

\begin{figure}[htb]
\begin{lstlisting}
class Bank{
   field ledger // a Node
   
    method deposit(dest,src,amt){
       destNd = this.ledger.find(dest)
       srcNd = this.ledger.find(src)
       srcBalance = srcNd.getBalance()
       if ( destNd =/=null && srcNd=/=null && srcBalance>=amt && amt >=0 ) then
           destNd.addToBalance(amt)
           srcNc.addToBalance(-amt)           
    }  }     
    method newAccount(amt){
      if (amt>=0)  then{
           newAcc = new Account(this);
           this.ledger = new Node(amt,this.ledger,newAcc)
           return newAcc 
    } }
   
   ghost  balance(acc){ this.ledger.balance(acc)  } 
}
\end{lstlisting}
 \vspace*{-7mm}
\caption{$M_{BA2}$: Implementation of \prg{Bank}   -- version 2}
\label{fig:BanAccImplV2b}
\end{figure}

\begin{figure}[htb]
\begin{lstlisting}
class Node{
   field balance
   field next   
   field myAccount
   
   method addToBalance(amt){
       this.balance = this.balance + amt
   }   
   method find(acc){
      if this.myAccount == acc then{
          return this
     } else { 
          if this.next==null then{
              return null
          } else {
              return this.next.find(acc)
    }  } } 
    method getBalance(){ return balance }
    
    ghost balance(acc){
        if (this.myAccount == acc)  then  this.balance
             else  ( if this.next==null then -1 else this.next.find(acc) )
    }
}          
\end{lstlisting}
 \vspace*{-7mm}
\caption{$M_{BA2}$: Implementation of \prg{Node}   -- version 2}
\label{fig:BanAccImplV2n}
\end{figure}

\begin{figure}[htb]
\begin{lstlisting}
class Account{

    field myBank
    
    method deposit{src,amt){
             this.myBank.deposit(this,src,amt)
    }   }    
    method makeAccount(amt){
      if (amt>=0 && this.balance>=amt) then{
           newAcc = this.myBank.makeNewAccount(0)
           newAcc.deposit(this,amt)
           return newAcc
    }    }     

   ghost balance(){ this.myBank.balance(this) }   
}
\end{lstlisting}
 \vspace*{-7mm}
\caption{$M_{BA2}$: Implementation of  \prg{Account} -- version 2}
\label{fig:BanAccImplV2a}
\end{figure}

$\M_{BA1}$, the second version is shown Fig. \ref{fig:BanAccImplV2a} and \ref{fig:BanAccImplV2b}. It keeps all the information 
in the \prg{ledger}: each \prg{Node} points to an \prg{Account} 
and contains the balance for this particular \prg{Account}. Here \prg{balance} is a
ghost field of \prg{Account}; the body of that declaration calls the ghost field function \prg{balanceOf} of the \prg{Bank} which in its
turn calls the ghost field function \prg{balanceOf} of the \prg{Node}. Note that the latter is recursively defined.

Note also that \prg{Node} exposes the function \prg{addToBalance(...)}; a call to this function   modifies the \prg{balance} of an \prg{Account} without requiring that the caller has access to the \prg{Account}. This might look as if it contradicted assertions (1) and  (2)
  from section \ref{sect:motivate:Bank}. However, upon closer inspection, we see that the assertion is satisfied. Remember that we employ a two-module semantics, where any change in the balance of an account is observed from one external state, to another external state. By definition, a configuration is external if its receiver is external.  However, no external object will ever have access to a \prg{Node}, and therefore no external object will ever be able to call the method \prg{addToBalance(...)}. In fact, we can add another assertion, (4), which promises that any internal object which is externally accessible is either a \prg{Bank} or an \prg{Account}.

(4)\ \  $\triangleq$\ \ $\forall \prg{o},\forall \prg{o}'.[\ \ \External{ \prg{o}}\  \wedge\ \neg (\External{ \prg{o}'}) \ \wedge\    \CanAccess{\prg{o}}{\prg{o}'}$\\
  $\strut \hspace{5.6cm}    
    \longrightarrow \ \    
  [\    \prg{o}:\prg{Account}\ \vee \ \prg{o}':\prg{Bank}\  \ ]\ \ \  \ \ \ \ ] \hfill $

\clearpage

\section{Examplar: Authorising ERC20}
\label{sect:example:ERC20}
{ 
 ERC20~\cite{ERC20} is a widely used token standard which describes the 
 basic functionality expected by any    Ethereum-based token contract. 
 It issues and keeps track of participants' tokens, and supports the  transfer
 of tokens between participants. 
%
%
Transfer of tokens 
 can   take place only provided that  there were sufficient tokens in the
 owner's account, and that
 the transfer was instigated by the owner, or by somebody authorized
 by the owner.

We specify this in \Chainmail as follows:
A decrease in  a participant's \prg{balance} 
can only be caused by a transfer instigated by the 
account holder themselves\\ (\ie $\Calls {\prg{p}} {\prg{transfer}} {...} {...}$), or by
an authorized transfer instigated by another participant $\prg{p}''$  (\ie $\Calls {\prg{p}''} {{\prg{transferFrom}} } {..} {..}$) who 
has authority for more than the tokens spent (\ie  $\prg{e}.\prg{allowed}(\prg{p},\prg{p}'')\geq \prg{m}$)
 
\vspace{.15cm}
\noindent
$\forall \prg{e}:\prg{ERC20}.\forall \prg{p}:\prg{Object}.\forall \prg{m},\prg{m}':\prg{Nat}.$\\
\strut \hspace{0.3cm} $[\ \ \prg{e}.\prg{balance(p)}=\prg{m}+\prg{m'}\ \wedge \ \Next{\prg{e}.\prg{balance(p)}=\prg{m}'}$ \\ 
\strut \hspace{0.4cm} \ \ \ $\longrightarrow$\\
\strut \hspace{0.4cm} \ \ \ $\exists \prg{p}',\prg{p}'':\prg{Object}.$ \\
\strut \hspace{0.4cm} \ \ \  $[\ \  \Calls{\prg{p}} {\prg{transfer}}  {\prg{e}}  {\prg{p}',\prg{m}} \  \  \ \vee\, $\\
\strut \hspace{0.4cm} \ \ \   $\ \ \ \ \prg{e}.\prg{allowed}(\prg{p},\prg{p}'')\geq \prg{m} \ \wedge \ \Calls{\prg{p}''} {\prg{transferFrom}}  {\prg{e}}  {\prg{p}',\prg{m}}\       \  ]$\\
\strut \hspace{0.3cm} $] $
\vspace{.15cm}

\noindent
That is to say: if next configuration witnesses a decrease of \prg{p}'s balance by
 $\prg{m}$, then the current configuration was a call of \prg{transfer} instigated by
 \prg{p}, or  a call of \prg{transferFrom} instigated by somebody authorized by \prg{p}.
 The term $\prg{e}.\prg{allowed}(\prg{p},\prg{p}'')$,  means that the
ERC20 variable \prg{e} holds a field called \prg{allowed}   which maps pairs of participants to numbers; such
mappings are supported in Solidity\cite{Solidity}.
 
We now define what it means for $\prg{p}'$ to be authorized  to  spend 
up to \prg{m} tokens on  $\prg{p}$'s behalf: At some point in the
past,  \prg{p} gave authority to $\prg{p}'$  to spend   \prg{m}
plus the sum of  tokens
spent so far by $\prg{p}' $ on the behalf of \prg{p}.

\vspace{.15cm}
\noindent
 $\forall \prg{e}:\prg{ERC20}.\forall \prg{p},\prg{p'}:\prg{Object}.\forall \prg{m}:\prg{Nat}.$\\
\strut \hspace{0.3cm} $[\ \ \prg{e}.\prg{allowed}(\prg{p},\prg{p}')=\prg{m} $\\
\strut \hspace{0.4cm} \ \ \ $\longrightarrow$\\
\strut \hspace{0.4cm} \ \ \  
     $\PrevId\langle\ \  \Calls{\prg{p}}  {\prg{approve}}  {\prg{e}} {\prg{p}',\prg{m}} $\\
      \strut \hspace{1.7cm} \ $\vee $\\
\strut \hspace{1.7cm} \  
     $    \prg{e}.\prg{allowed}(\prg{p},\prg{p}')=\prg{m}   
        \  \wedge\ $\\
\strut \hspace{1.5cm} \ \ \ \ \          
$  \neg   (\, {\Calls{\prg{p}'} {\prg{transferFrom}} {\prg{e}} {\prg{p},\_}   } \, \vee \, {\Calls{\prg{p}} {\prg{approve}} {\prg{e}} {\prg{p},\_} } \, ) $\\
      \strut \hspace{1.7cm}\  $\vee $\\
\strut \hspace{1.7cm}   \  $ \exists \prg{p}'':\prg{Object}.\exists\prg{m'}:\prg{Nat}.$\\
 \strut \hspace{1.7cm}\  $[\   
  \prg{e}.\prg{allowed}(\prg{p},\prg{p}')=\prg{m}+\prg{m}'  \, \wedge\,   {\Calls{\prg{p}'} {\prg{transferFrom}} {\prg{e}} {\prg{p}'',\prg{m}'}  }   ]$\\
\strut \hspace{0.4cm} \ \ \  \ \ \  \ \ \ \ \ $\rangle $\\
\strut \hspace{0.3cm} $]$
\vspace{.15cm}
 
In more detail\  $\prg{p}'$ is allowed to spend 
up to \prg{m} tokens on their behalf of $\prg{p}$, if in the   previous step either a)
 \prg{p} made the call \prg{approve} on \prg{e} 
with arguments $\prg{p}'$ and \prg{m}, or b)  
$\prg{p}'$ was allowed to spend  up to \prg{m} tokens for $\prg{p}$
and did not transfer any of \prg{p}'s tokens, nor did \prg{p} issue a fresh authorization,
or c) \prg{p} was authorized for $\prg{m}+\prg{m}'$ and spent $\prg{m}'$. 
  
  \vspace{.1cm}
 
 Thus, the holistic specification gives to account holders an
 "authorization-guarantee": their balance cannot decrease unless they
 themselves, or somebody they had authorized, instigates a transfer of
 tokens. Moreover, authorization is {\em not} transitive: only the
 account holder can authorise some other party to transfer funds from
 their account: authorisation to spend from an account does not confer
 the ability to authorise yet more others to spend also.
 
 
 With traditional  specifications, to obtain the "authorization-guarantee", 
one would need to inspect the pre- and post- conditions of {\em all} the functions
in the contract, and determine which of the functions decrease balances, and which of the functions 
 affect authorizations.
 In the case of the \prg{ERC20}, one would have to inspect all eight such specifications
 (given in appendix \ref{ERC20:appendix}), 
 where only five are relevant to the question at hand.
 In the general case, \eg the DAO, the number of   functions which are unrelated
 to the question at hand can be very large.
  
More importantly, with traditional  specifications, nothing stops the next release of the contract to add, 
\eg, a method which allows participants to share their authority, and thus
violate the "authorization-guarantee", or even a super-user from skimming 0.1\% from each of the accounts.

}

\clearpage

\subsection{Example -- ERC20, the traditional specification }
\label{ERC20:appendix}
 
We compare the holistic and the traditional specification of ERC20

As we said earlier,  the holistic specification gives to account holders an
 "authorization-guarantee": their balance cannot decrease unless they
 themselves, or somebody they had authorized, instigates a transfer of
 tokens. Moreover, authorization is {\em not} transitive: only the
 account holder can authorise some other party to transfer funds from
 their account: authorisation to spend from an account does not confer
 the ability to authorise yet more others to spend also.
 
 With traditional  specifications, to obtain the "authorization-guarantee", 
one would need to inspect the pre- and post- conditions of {\em all} the functions
in the contract, and determine which of the functions decrease balances, and which of the functions 
 affect authorizations.
In Figure \ref{fig:classicalERC20} we outline a traditional specification for the \prg{ERC20}.
We give two speficiations for \prg{transfer}, another two for \prg{tranferFrom}, and one for all 
the remaining functions. The  first specification says, \eg, that if  
 \prg{p} has sufficient tokens, and it calls \prg{transfer}, then the transfer will take place.  
The second specification says that  if \prg{p} has insufficient tokens, then 
the transfer will not take place (we assume that in this
specification language, any entities not mentioned in the pre- or post-condition 
are not affected).
 
 Similarly, we would have to give another two specifications to define the behaviour of 
if \prg{p''} is authorized and executes \prg{transferFrom}, then   the balance decreases. 
But they are {\em implicit} about the overall behaviour and the   {\em necessary} conditions,
e.g., what are all the possible actions that can cause a decrease of balance?

\begin{figure}   
\fbox{
$
\begin{array}{c}
 \prg{e}:\prg{ERC20}\ \wedge\  \prg{p},\prg{p''}:\prg{Object} 
  \wedge\ \prg{m},\prg{m}',\prg{m}'':\prg{Nat}\ \wedge\   \\
 \prg{e}.\prg{balance(p)} = \prg{m}+\prg{m}'\ \ \wedge\ \ \prg{e}.\prg{balance(p'')} = \prg{m}''\ \ \wedge\ \ \prg{this}=\prg{p} \\
   \{ \ \ \prg{e.transfer(p'',m')} \ \ \}   \\
    \prg{e}.\prg{balance(p)} = \prg{m}\ \ \wedge\ \ \prg{e}.\prg{balance(p'')} = \prg{m}''+\prg{m}'
\ \ \\
\ \ \\
  \prg{e}:\prg{ERC20}\ \wedge\  \prg{p},\prg{p'}:\prg{Object}  \wedge\ \prg{m},\prg{m}',\prg{m}'':\prg{Nat}\ \wedge\     \prg{e}.\prg{balance(p)} = \prg{m} \ \ \wedge \prg{m} <  \prg{m}'  \\
   \{ \ \ \prg{e.transfer(p',m')} \ \ \}   \\
  \prg{e}.\prg{balance(p)} = \prg{m}  
  \\
  \\
\prg{e}:\prg{ERC20}\ \wedge\  \prg{p},\prg{p'},\prg{p}'':\prg{Object} 
  \wedge\ \prg{m},\prg{m}',\prg{m}'',\prg{m}''':\prg{Nat}\ \wedge\   \\
 \prg{e}.\prg{balance(p)} = \prg{m}+\prg{m}'\ \ \wedge\ \ \prg{e}.\prg{allowed(p,p')}=\prg{m}'''+\prg{m}' \ \wedge\\
  \prg{e}.\prg{balance(p'')} = \prg{m}''\ \ \wedge\ \ \prg{this}=\prg{p'} \\
   \{ \ \ \prg{e.transferFrom(p',p'',m')} \ \ \}   \\
    \prg{e}.\prg{balance(p)} = \prg{m}\ \ \wedge\ \ \prg{e}.\prg{balance(p'')} = \prg{m}''+\prg{m}'
     \ \wedge\ \ \prg{e}.\prg{allowed(p,p')}=\prg{m}'''
\ \ \\
\ \ \\
  \prg{e}:\prg{ERC20}\ \wedge\  \prg{p},\prg{p'}:\prg{Object}  \wedge\ \prg{m},\prg{m}',\prg{m}'':\prg{Nat}\ \wedge\ \prg{this}=\prg{p}' \ \wedge \\
      ( \ \prg{e}.\prg{balance(p)} =\prg{m} \wedge \prg{m} <  \prg{m}''\  \vee \ 
  \prg{e}.\prg{allowed(p,p')}=\prg{m'} \wedge \prg{m'} < \prg{m}'' \ ) \\
   \{ \ \ \prg{e.transferFrom(p,p'',m'')} \ \ \}   \\
  \prg{e}.\prg{balance(p)} = \prg{m} \wedge  \prg{e}.\prg{allowed(p,p')}=\prg{m'}
  \\
  \\
  \prg{e}:\prg{ERC20}\ \wedge\  \prg{p},\prg{p'}:\prg{Object}  \wedge\ \prg{m}:\prg{Nat}\ \wedge\ \prg{this}=\prg{p}  \\
   \{ \ \ \prg{e.approve(p',m')} \ \ \}   \\
  \prg{e}.\prg{allowed(p,p')} = \prg{m} 
  \\
  \\
   \prg{e}:\prg{ERC20}\ \wedge\ \prg{m}:\prg{Nat}\ \wedge\    \prg{p}.\prg{balance}=\prg{m}    \\
   \{ \ \ \prg{k}=\prg{e.balanceOf(p)} \ \ \}   \\
  \prg{k}=\prg{m} \ \wedge \ \prg{e.balanceOf(p)} = \prg{m}  
  \\
  \\
   \prg{e}:\prg{ERC20}\ \wedge\ \prg{m}:\prg{Nat}\ \wedge\    \prg{e}.\prg{allowed(p,p')}=\prg{m}    \\
   \{ \ \ \prg{k}=\prg{e.allowance(p,p')} \ \ \}   \\
  \prg{k}=\prg{m} \ \wedge \ \prg{e}.\prg{allowed(p,p')}=\prg{m} 
  \\
  \\
   \prg{e}:\prg{ERC20}\ \wedge\ \prg{m}:\prg{Nat}\ \wedge\     \sum_{\prg{p}\in dom(\prg{e}.\prg{balance})}^{}{\prg{e}.\prg{balance}(\prg{p})}=\prg{m}    \\
   \{ \ \ \prg{k}=\prg{e.totalSupply()} \ \ \}   \\
  \prg{k}=\prg{m}   
\end{array}
$
}
\caption{Classical specification for the \prg{ERC20}}
\label{fig:classicalERC20}
\end{figure}

\clearpage

\section{Examplar: Defending the DAO}
\label{Dao:appendix}
The DAO  {(Decentralised Autonomous Organisation)}~\cite{Dao}  is a famous Ethereum contract  which aims to support
collective management of funds,  and to place power directly in the
hands of the owners of the DAO
rather than delegate it to directors.
Unfortunately, the DAO was not robust:
a re-entrancy bug   exploited in June 2016 led  to a loss of   \$50M, and
a hard-fork in the  chain ~\cite{DaoBug}.
%
With holistic specifications  we can  write a succinct requirement that a
DAO contract should always be able to repay any owner's money.
Any contract which satisfies such a holistic specification cannot demonstrate the DAO bug.
 
Our specification consists of three requirements.
First, that the DAO always holds at least as 
much money as any owner's balance.
To express this we use 
the field \prg{balances} which is a mapping from participants's addresses to 
numbers. Such mapping-valued fields exist in Solidity, but they could
also be taken to be ghost fields~\cite{ghost}.
  
\vspace{.1cm}

\noindent
 \strut \hspace{0.5cm} $\forall \prg{d}:\prg{DAO}.\forall \prg{p}:\prg{Any}.\forall\prg{m}:\prg{Nat}.$\\
\strut \hspace{0.5cm} $[\ \ \prg{d.balances(p)}=\prg{m}  \ \ \  \longrightarrow  \ \ \ \prg{d}.\prg{ether}\geq \prg{m} \ \ ] $

\noindent
Second, that when an owner asks to be repaid, she is sent all her money.
\vspace{.1cm}

\noindent
 \strut \hspace{0.5cm} $\forall \prg{d}:\prg{DAO}.\forall \prg{p}:\prg{Any}.\forall\prg{m}:\prg{Nat}.$\\
\strut \hspace{0.5cm} $[\ \ \prg{d.balance(p)}=\prg{m}
 \ \wedge \ \Calls{\prg{p}}{\prg{repay}}{\prg{d}}{\_}  $\\
 $\strut \hspace{5.5cm}   \ \ \  \longrightarrow  \ \ \  \Future{\Calls{\prg{d}}{\prg{p}}{\prg{send}}{\prg{m}}}\ \ ] $  
\vspace{.1cm}

%

\noindent
\sd{Third, that the balance of an owner is a function of the its balance in the previous step,
or the result of it joining the DAO, or asking to be repaid \etc.}
 
\noindent
$\strut \hspace{0.5cm} \forall \prg{d}:\prg{DAO}.\forall \prg{p}.\forall:\prg{m}:\prg{Nat}.$\\
$\strut \hspace{0.5cm} [ \ \ \  \prg{d.Balance(p)}=\prg{m} \ \ \  \longrightarrow   
 \ \  \ \ 
  [ \  \ \Prev{\Calls{\prg{p}}{\prg{repay}}{\prg{d}}{\_}}\, \wedge\, \prg{m}=\prg{0} \ \ \ \ \vee $\\
$\strut \hspace{5.7cm}      
\Prev{\Calls{\prg{p}}{\prg{join}}{\prg{d}}{\prg{m}}}  \ \ \ \ \vee   $\\
 $\strut \hspace{5.7cm}  ... \  ]$ \\
%
$\strut \hspace{0.5cm} ] $

More cases are needed to reflect the financing and repayments of proposals, but they can be expressed with the concepts described so far.

\noindent
The requirement that \prg{d} holds at least \prg{m} ether precludes the DAO bug,
in the sense that  any contract satisfying that spec cannot exhibit  the  bug:   a contract
which satisfies the spec  is guaranteed to always have enough money to satisfy all \prg{repay} requests.
This guarantee  holds, regardless of how many functions there are in the DAO.
In contrast, to preclude the DAO  bug with a classical spec, one would need to write a spec for each of the
DAO functions (currently 19), a spec for each function of the auxiliary contracts used by the DAO,
and then study their emergent  behaviour.

These 19 DAO functions   have several different concerns:
who may vote   for a proposal, who is eligible to submit a proposal,
how long the consultation period is for deliberating a proposal, what
is the quorum, how to chose curators, what is the value of a token,
Of these groups of functions, only  a handful affect the balance of a
participant. Holistic specifications allow us to concentrate on aspect of DAO's behaviour across \emph{all} its functions.

\clearpage

\section{Examplar: Attenuating the DOM}
\label{sect:example:DOM}
\emph{Attenuation} is the ability to provide to third party objects \emph{restricted}  access to an object's functionality. This is usually achieved through the introduction of an intermediate object. While such intermediate objects are a common programming practice, 
the term was coined, and the practice  was studied in detail in the object capabilities literature,
e.g.  \cite{MillerPhD}. 

The key structure underlying a web browser is the Domain Object Model
(DOM), a recursive composite tree structure of objects that represent
everything display in a browser window.  Each window has a single DOM
tree which includes both the page's main content and also third party
content such as advertisements. To ensure third party content cannot
affect a page's main content,
specifications for attenuation for the DOM were proposed in
\textit{Devriese et al:}   \cite{dd}. 

This example deals with a tree of DOM nodes: Access to a DOM node
gives access to all its parent and children nodes, and the ability to
modify the node's properties. However, as the top nodes of the tree
usually contain privileged information, while the lower nodes contain
less crucial third-party information, we want to be able to limit  access given to third parties to only the lower part of the DOM tree. We do this through a \prg{Wrapper}, which has a field \prg{node} pointing to a \prg{Node}, and a field \prg{height} which restricts the range of \prg{Node}s which may be modified through the use of the particular \prg{Wrapper}. Namely, when you hold a \prg{Wrapper}  you can modify the \prg{property} of all the descendants of the    \prg{height}-th ancestors of the \prg{node} of that particular \prg{Wtrapper}. 

In Figure \ref{fig:WrapperUse} we show an example of the use of  \prg{Wrapper} objects attenuating the use of \prg{Node}s  The function \prg{usingWrappers} takes as parameter an object of unknown provenance, here called \prg{unknwn}. On lines 2-7 we create  a tree consisting of nodes \prg{n1}, \prg{n2}, ... \prg{n6}, depicted as blue circles on the   right-hand-side of the Figure. On line 8 we create a wrapper of \prg{n5} with height \prg{1}. This means that the wrapper \prg{w} may be used to modify \prg{n3}, \prg{n5} and \prg{n6} (\ie the objects in the green triangle), while it cannot be used to modify \prg{n1}, \prg{n2}, and \prg{4} (\ie the objects within the blue triangle). 
On line 8 we call a  function named \prg{untrusted} on the \prg{unknown} object, and pass \prg{w} as   argument. 

\begin{figure}[htb]
\begin{tabular}{llll}
\ \ &
\begin{minipage}{0.45\textwidth}
\begin{lstlisting}
method usingWrappers(unknwn){
   n1=Node(null,"fixed"); 
   n2=Node(n1,"robust"); 
   n3=Node(n2,"const"); 
   n4=Node(n3,"volatile");
   n5=Node(n4,"variable");
   n6=Node(n5,"ethereal");
   w=Wrapper(n5,1);
   
   unknwn.untrusted(w);
   
   assert n2.property=="robust" 
   ...
}
\end{lstlisting}
\end{minipage}
& & 
\begin{minipage}{0.75\textwidth}
\includegraphics[width=\linewidth, trim=145  320 60 105,clip]{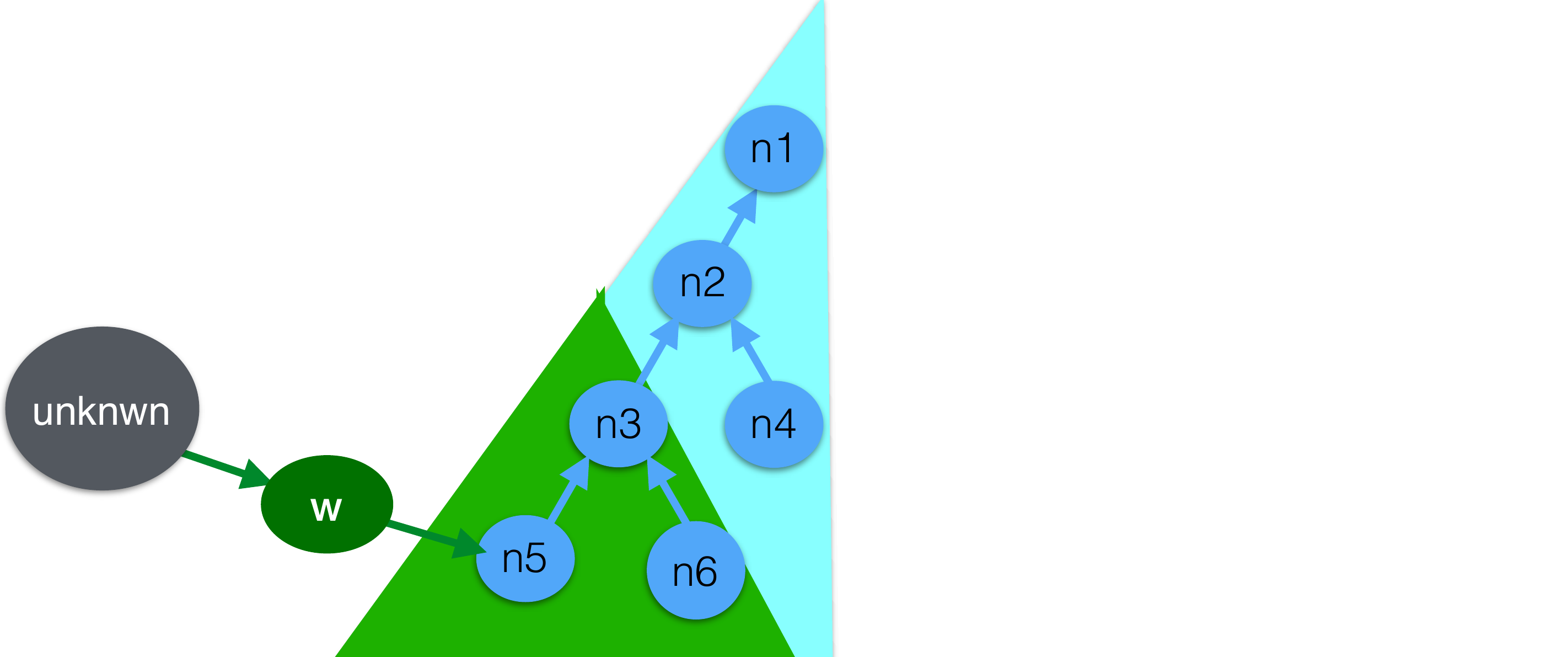}
\strut \\
\strut \\

\end{minipage}
\end{tabular}
 \vspace*{-4.5mm}
\caption{\prg{Wrapper}s protecting \prg{Node}s }
\label{fig:WrapperUse}
\end{figure}

Even though we know nothing about the \prg{unknown} object or its \prg{untrusted} function, and even though the call gives to \prg{unknown}  access to \prg{w}, which in turn has transitive access to all  \prg{Node}-s in the tree, 
we know that 
line 100 will not affect the   \prg{property} fields of the nodes \prg{n1}, \prg{n2}, and \prg{n4}. 
Thus, the assertion on line 12 is guaranteed to succeed. 
The question is how do we specify \prg{Wrapper}, so as to be able to make such an argument. 

A  specification of the class \prg{Wrapper} 
in the traditional   style, \eg  \cite{Leavens-etal07} 
consists of pairs of pre- and post- conditions for each of the
functions of that class. Each such pair gives a {\em sufficient}
condition for some effect to take place: for example the
call \prg{w.setProperty(i,prp)} where \prg{i} is smaller
than \prg{w.height} is a sufficient condition to
modify   \prg{property} of the \prg{i}-th parent of \prg{w.node}. But
we do not know what other ways there may be  to modify a
node's  \prg{property}. In other words, we have not specified
the \emph{necessary conditions}.
%
%
%
In our example:

\begin{quote}
The \emph{necessary} condition for the modification of \prg{nd.property} for some \prg{nd} of class \prg{Node}  is either access to some   \prg{Node} in the same tree, or  access to a \prg{w} of class \prg{Wrapper} where the \prg{w.height}-th parent of \prg{w} is an ancestor of \prg{nd}.
\end{quote}

With such a specification we can prove that the assertion on line 12 will succeed. And, more importantly, we can ensure that all future updates of the \prg{Wrapper} abstract data type will uphold the \emph{protection} of the \prg{Node} data.
%
%
To give a flavour of \Chainmail, we use it  express the requirement from above:
\vspace{.1cm}

\noindent
$\forall \prg{S}:\prg{Set}.\forall \prg{nd}:\prg{Node}.\forall \prg{o}:\prg{Object}.$\\
$[$ \\
$\strut\  \ {\Using{\Future {\Changes {\prg{nd.property}}}}  {\prg{S}}}$ \\
$\strut  \ \ \longrightarrow$\\
$\strut \ \ \ \exists \prg{o}.[\ \prg{o}\in\prg{S}\ \ \wedge\ \ \neg(\prg{o}:\prg{Node})\ \ \wedge\  \ \neg(\prg{o}:\prg{Wrapper})\ \ \ \wedge \  $\\
$ \strut\ \ \  \ \ \ \ \ \ \ \ [\ \exists \prg{nd}':\prg{Node}. \CanAccess{\prg{o}}{\prg{nd}'}\  \ \ \  \vee$\\
$ \strut\ \  \ \  \ \  \ \ \ \ \ \ \ \exists \prg{w}:\prg{Wrapper}.\exists k\!:\!\mathbb{N}.
  (\ \CanAccess{\prg{o}}{\prg{w}}  \ \wedge\ \prg{nd}.\prg{parnt}^k\!=\!\prg{w.node}.\prg{parnt}^{\prg{w.height}}) \ \ \ ]\ ]$\\
$ ]$

\vspace{.1cm}


\noindent
That is, if  the value of \prg{nd.property} is modified ($\Changes{\_}$) at some future point ($\Future{\_}$) 
and if reaching that future point involves no more objects than those from
set \prg{S} (\ie ${\Using{\_} {\prg{S}}}$), then at least one (\prg{o}) of the objects in \prg{S} is not a \prg{Node} nor a \prg{Wrapper}, and \prg{o} has direct access to some  node ($\CanAccess{\prg{o}}{\prg{nd}'}$), or  to some wrapper \prg{w} and the \prg{w.height}-th parent of \prg{w} is an ancestor of \prg{nd} (that is, $\prg{parnt}^k\!=\!\prg{w.node}.\prg{parnt}^{\prg{w.height}}$).
 Note that our ``access'' is intransitive: $\CanAccess x y$ holds if  either \prg{x} has a field  pointing to \prg{y}, or  \prg{x}  is the receiver and \prg{y} is one of the arguments  in the executing method call.
 

\clearpage

\section{Coq Formalism}
\label{sect:coq}
 In this section we present the properties of Chainmail that have been formalised in the Coq model. Table \ref{Coq} refers to proofs that can be found in the associated Coq formalism \cite{coq}.

\begin{table}
  \begin{tabular}{|l|l|l|l|}
    \hline

\textbf{Lemma \ref{lemma:linking} (and 3)} &
Properties of Linking
        & 
\parbox{.45\textwidth}{\scriptsize\begin{enumerate}[label={(\arabic*)}]
            \item \texttt{moduleLinking\_associative}
            \item \texttt{moduleLinking\_commutative\_1}
            \item \texttt{moduleLinking\_commutative\_2}
            \item \texttt{linking\_preserves\_reduction}
        \end{enumerate}}
        \\
\hline
\textbf{Lemma \ref{lemma:classic}} &   
\parbox{.45\textwidth}{\scriptsize\begin{enumerate}[label={(\arabic*)}]
            \item $A \wedge \neg A \equiv \texttt{false}$
            \item $A \vee \neg A \equiv \texttt{true}$
            \item $A \vee A' \equiv A' \wedge A$
            \item $A \wedge A' \equiv A' \wedge A$
            \item $(A \vee A') \vee A'' \equiv A \vee (A' \vee A'')$
        \end{enumerate}}
        & 
\parbox{.45\textwidth}{\scriptsize\begin{enumerate}[label={(\arabic*)}]
            \item \texttt{sat\_and\_nsat\_equiv\_false}
            \item -
            \item \texttt{and\_commutative}
            \item \texttt{or\_commutative}
            \item \texttt{or\_associative}
        \end{enumerate}}
        \\
\hline
\textbf{Lemma \ref{lemma:basic_assertions_classical}} &   
\parbox{.45\textwidth}{\scriptsize\begin{enumerate}[label={(\arabic*)}]
            \item $A \wedge \neg A \equiv \texttt{false}$
            \item $A \vee \neg A \equiv \texttt{true}$
            \item $A \vee A' \equiv A' \wedge A$
            \item $A \wedge A' \equiv A' \wedge A$
            \item $(A \vee A') \vee A'' \equiv A \vee (A' \vee A'')$
            \item $(A \vee A') \wedge A'' \equiv (A \vee A'') \wedge (A' \vee A'')$
            \item $(A \wedge A') \vee A'' \equiv (A \wedge A'') \vee (A' \wedge A'')$
            \item $\neg (A \wedge A') \equiv (\neg A \vee \neg A')$
            \item $\neg (A \vee A') \equiv (\neg A \wedge \neg A')$
            \item $\neg (\exists x.A) \equiv \forall x. (\neg A)$
            \item $\neg (\exists S.A) \equiv \forall S. (\neg A)$
            \item $\neg (\forall x.A) \equiv \exists x. (\neg A)$
            \item $\neg (\forall S.A) \equiv \exists S. (\neg A)$
        \end{enumerate}}
        & 
\parbox{.45\textwidth}{\scriptsize\begin{enumerate}[label={(\arabic*)}]
            \item \texttt{sat\_and\_nsat\_equiv\_false}
            \item -
            \item \texttt{and\_commutative}
            \item \texttt{or\_commutative}
            \item \texttt{or\_associative}
            \item \texttt{and\_distributive}
            \item \texttt{or\_distributive}
            \item \texttt{neg\_distributive\_and}
            \item \texttt{neg\_distributive\_or}
            \item \texttt{not\_ex\_x\_all\_not}
            \item \texttt{not\_ex\_$\Sigma$\_all\_not}
            \item \texttt{not\_all\_x\_ex\_not}
            \item \texttt{not\_all\_$\Sigma$\_ex\_not}
        \end{enumerate}}
\\
\hline
  \end{tabular}
  \caption{Chainmail Properties Formalised in Coq}
  \label{Coq}
\end{table}


\end{document}